%%% %&latex
\documentclass[11pt]{article} 

\usepackage[ansinew]{inputenc}
\usepackage{amsmath, amssymb, graphics, amsthm, bbm}
\usepackage{graphicx}
\usepackage{epsfig}
\usepackage{color}
\usepackage{undertilde}
\usepackage{fancyhdr} 
\usepackage{arydshln}
\usepackage{color}
\usepackage{bbold}
\usepackage{slashed}

\oddsidemargin 0cm
\topmargin -1.6cm
\textwidth 16cm 
\textheight 23.5cm

\newcommand{\ket}[1]{\left| #1 \right\rangle}

\newcommand{\braopket}[3]{\left\langle \vphantom {#1 #2 #3} #1 \hphantom{|} \right| #2 \left| \hphantom{|} \vphantom {#1 #2 #3} #3 \right\rangle}

\newtheorem{theorem}{Theorem}

\newcommand{\m}{\mbox{}}

\makeatletter
\@addtoreset{equation}{section}
\makeatother
\renewcommand{\theequation}{\thesection.\arabic{equation}}

\newcommand{\be}{\begin{equation}}
\newcommand{\ee}{\end{equation}}
\newcommand{\ba}{\begin{eqnarray}}
\newcommand{\ea}{\end{eqnarray}}

\title{{\sf On the Implementation of the}\\
{\sf Canonical Quantum Simplicity Constraint}} 
\author{
{\sf N. Bodendorfer}$^{1,2}$\thanks{{\sf 
norbert.bodendorfer@gravity.fau.de}},
{\sf T. Thiemann}$^{1,3}$\thanks{{\sf 
thomas.thiemann@gravity.fau.de,
tthiemann@perimeterinstitute.ca}},
{\sf A. Thurn}$^1$\thanks{{\sf 
andreas.thurn@gravity.fau.de}}\\
\\
{\sf $^1$ Inst. for Theoretical Physics III, FAU Erlangen -- N\"urnberg,}\\
{\sf Staudtstr. 7, 91058 Erlangen, Germany}\\
%\\
%{\sf and}\\
\\
{\sf $^2$ Institute for
Gravitation and the Cosmos \& Physics
  Department,}\\
{\sf   Penn State, University Park, PA 16802, U.S.A.}\\
%\\
%{\sf and}\\
\\
{\sf $^3$ Perimeter Institute for Theoretical Physics,}\\ 
{\sf 31 Caroline Street N, Waterloo, ON N2L 2Y5, Canada}
}
\date{{\small\sf \today}}

\begin{document} 

\maketitle

{\sf

\begin{abstract}
In this paper, we are going to discuss several approaches to solve the quadratic and linear simplicity constraints in the context of the canonical formulations of higher dimensional General Relativity and Supergravity developed in \cite{BTTI,BTTII,BTTIII,BTTIV,BTTVI,BTTVII}.
 Since the canonical quadratic simplicity constraint operators have been shown to be anomalous in any dimension $D \geq 3$ in \cite{BTTIII}, non-standard methods have to be employed to avoid inconsistencies in the quantum theory. We show that one can choose a subset of quadratic simplicity constraint operators which are non-anomalous among themselves and allow for a natural unitary map of the spin networks in the kernel of these simplicity constraint operators to the SU$(2)$-based Ashtekar-Lewandowski Hilbert space in $D=3$. The linear constraint operators on the other hand are non-anomalous by themselves, however their solution space will be shown to differ in $D=3$ from the expected Ashtekar-Lewandowski Hilbert space. We comment on possible strategies to make a connection to the quadratic theory. Also, we comment on the relation of our proposals to existing work in the spin foam literature and how these works could be used in the canonical theory. We emphasise that many ideas developed in this paper are certainly incomplete and should be considered 
as suggestions for possible starting points for more satisfactory treatments in the future.  
\end{abstract}

}

\newpage

\tableofcontents

\vspace{20mm}

\section{Introduction}
In \cite{BTTI,BTTII}, gravity in any dimension $D+1 \geq 3$ has been formulated as a gauge theory of SO$(1,D)$ or of the compact group SO$(D+1)$, irrespective of the spacetime signature. The resulting theory has been obtained on two different routes, a Hamiltonian analysis of the Palatini action making use of the procedure of gauge unfixing\footnote{See \cite{MitraGaugeInvariantReformulation, AnishettyGaugeInvarianceIn, VytheeswaranGaugeUnfixingIn} for original literature on gauge unfixing.}, and on the canonical side by an extension of the ADM phase space. The additional constraints appearing in this formulation, the simplicity constraints, are well known. They constrain bivectors to be simple, i.e. the antisymmetrised product of two vectors. Originally introduced in Plebanski's \cite{PlebanskiOnTheSeparation} formulation of General Relativity as a constrained $BF$ theory in $3+1$ dimensions, they have been generalised to arbitrary dimension in \cite{FreidelBFDescriptionOf} and were considered in the context of Hamiltonian lattice gravity \cite{WaelbroeckAHamiltonianLattice, ZapataTopologicalLatticeGravity}. Moreover, discrete versions of the simplicity constraints are a standard ingredient of the Spin Foam approaches to quantum gravity \cite{BarrettRelativisticSpinNetworks, EngleFlippedSpinfoamVertex, FreidelANewSpin}, see \cite{PerezSpinFoamModels, AlexandrovCriticalOverviewOf} for reviews, and recently were also used in Group Field theory \cite{DePietriBarrett-CraneModelFrom, BenGelounEPRLFKGroupField, BaratinGroupFieldTheory} as well as on a simplicial phase space \cite{DittrichPhaseSpaceDescription, DittrichSimplicityInSimplicial}, where also their algebra was calculated. Two different versions of simplicity constraints are considered in the literature, which are either quadratic or linear in the bivector fields. The quantum operators corresponding to the quadratic simplicity constraints have been found to be anomalous both in the covariant \cite{EngleLoopQuantumGravity} as well as in the canonical picture \cite{WielandComplexAshtekarVariables, BTTIII}. On the covariant side, this lead to one of the major points of critique about the Barrett-Crane model \cite{BarrettRelativisticSpinNetworks}: The anomalous constraints are imposed strongly\footnote{Strongly here means that the constraint operator annihilates physical states, $\hat C \left|\psi\right\rangle = 0 ~ \forall \left| \psi \right\rangle \in \mathcal{H}_{phys}$}, which may imply erroneous elimination of physical degrees of freedom \cite{DiracLecturesOnQuantum}. This triggered the development of the new Spin Foam models \cite{EngleTheLoopQuantum, LivineNewSpinfoamVertex, EngleFlippedSpinfoamVertex, EngleLoopQuantumGravity, FreidelANewSpin, KaminskiSpinFoamsFor}, in which the quadratic simplicity constraints are replaced by linear simplicity constraints. The linear version of the constraints is slightly stronger than the quadratic constraints, since in $3+1$ dimensions the topological solution is absent. The corresponding quantum operators are still anomalous (unless the Immirzi parameter takes the values $\gamma = \pm \sqrt{\zeta}$, where $\zeta$ denotes the internal signature, or $\gamma = \infty$). Therefore, in the new models (parts of) the simplicity constraints are implemented weakly to account for the anomaly. Also, the newly developed U$(N)$ tools \cite{GirelliReconstructingQuantumGeometry, FreidelTheFineStructure, FreidelU(N)CoherentStates} have been recently applied to solve the simplicity constraints \cite{DupuisRevisitingTheSimplicity, DupuisHolomorphicSimplicityConstraints, DupuisHolomorphicLorentzianSimplicity}. 

In this paper, we 
%are not going to import techniques for solving the simplicity constraints which were developed in other contexts, but we 
are first going to take an unbiased look at them from the canonical perspective in the hope of finding new clues for how to implement the constraints correctly. 
Afterwards, we will compare our results to existing approaches from the Spin Foam literature and outline similarities and differences.
We stress that will not arrive at the conclusion that a certain kind of imposition will be the correct one and thus further research, centered around consistency considerations and the classical limit, has to be performed to find a satisfactory treatment for the simplicity constraints.
Of course, in the end an experiment will have to decide which implementation, if any, will be the correct one. Since such experiments are missing up to now, the general guidelines are of course mathematical consistency of the approach, as well as comparison with the classical implementation of the simplicity constraints in $D=3$, where the usual SU$(2)$ Ashtekar variables exist. If a satisfactory implementation in $D=3$ can be constructed, the hope would then be that this procedure has a natural generalisation to higher dimensions. Since parts of the very promising results developed from the Spin Foam literature are restricted to four dimensions, we will restrict ourselves to dimension independent treatments in the main part of this paper.

The paper will be divided into three parts. We will begin with investigating the quadratic simplicity constraint operators which have been shown to be anomalous in \cite{BTTIII}. It will be illustrated that choosing a recoupling scheme for the intertwiner naturally leads to a maximal closing subset of simplicity constraint operators. Next, the solution to this subset will be shown to allow for a natural unitary map to the SU$(2)$ based Ashtekar-Lewandowski Hilbert space in $D=3$ and we will finish the first part with several remarks on this quantisation procedure. In the section \ref{sec:TheLinearSimplicity}, we will analyse the strong implementation of the linear simplicity constraint operators since they are non-anomalous from start. The resulting intertwiner space will be shown to be one-dimensional, which is problematic because this forbids the construction of a natural map to the SU$(2)$ based Ashtekar-Lewandowski Hilbert space.  In contrast to the quadratic case, the linear simplicity constraint operators will be shown to be problematic when acting on edges. We will discuss several possibilities of how to resolve these problems and finally introduce a mixed quantisation, in which the linear simplicity constraints will be substituted by the quadratic constraints plus a constraint ensuring the equality of the normals $N^I$ and $n^I(\pi)$. 
In section \ref{sec:Comparison}, we will compare our results to existing approaches from the Spin Foam literature. Finally, we will give a critical evaluation of our results and conclude in section \ref{sec:Conclusion}.

\section{The Quadratic Simplicity Constraint Operators}

\label{sec:QuadraticSimplicity}

\subsection{A Maximal Closing Subset of Vertex Constraints}

In our companion papers \cite{BTTI,BTTII,BTTIII,BTTIV,BTTVI,BTTVII}, a canonical connection formulation of $(D+1)$-dimensional Lorentzian General Relativity was developed, using an SO$(D+1)$-connection $A_{aIJ}$ and its conjugate momentum $\pi^{aIJ}$ as canonical variables. Here, $a,b,\ldots = 1,\ldots, D$ are spatial tensorial indices and $I,J,\ldots = 0,\ldots, D$ are Lie algebra indices in the fundamental representation. A key input of the construction are the (quadratic) simplicity constraints 
\be
	\pi^{a[IJ} \pi^{b|IJ]} \approx 0 \text{,}
\ee
which enforce, up to a topological sector present in $D=3$, that $\pi^{aIJ} \approx 2 n^{[I} E^{a|J]}$, where $E^{aJ}$ is an SO$(D+1)$ valued vector density, a so called hybrid vielbein, and $n^I$ is the unique (up to sign) normal defined by $n_I E^{aI}=0$. Fixing the time gauge $n^I = (1,0,\ldots, 0)$, one arrives at the ADM (extended phase space) formulation of General Relativity with SO$(D)$ gauge invariance, see \cite{BTTII} for details. The second class constraints which normally arise as stability conditions on the simplicity constraints are absent in our connection formulation, since they can be explicitly removed by the process of gauge unfixing after performing the Dirac analysis, see \cite{BTTII}. Essentially, they are gauge fixing conditions for the gauge transformations generated by the simplicity constraint, which change a certain part of the torsion of the $A_{aIJ}$. The square of this part of the torsion is included in a respective decomposition of the Palatini action and thus results in the second class partner for the simplicity constraint \cite{BTTII}.

A quantisation of the simplicity constraint using loop quantum gravity methods results in a complicated operator, since $\pi^{aIJ}$ becomes a flux operator which acts as the sum of all right invariant vector fields associated to the different edges at a vertex. In order to facilitate the treatment of this quantum constraint, it has been shown in \cite{BTTIII} that the necessary and sufficient building blocks of the quadratic simplicity constraint operator acting on a vertex $v$ are given by
\be 
R^e_{[IJ}R^{e'}_{KL]}f_{\gamma}=0 \hspace{5mm} \forall e,e' \in \{e'' \in E(\gamma); v = b(e'')\}  \text{,} \label{eq:allsimplicities}
\ee
where $R^e_{IJ}$ is the right invariant vector field associated to the edge $e$, $f_{\gamma}$ is the a cylindrical function defined on an adapted graph $\gamma$, e.g. a spin network, $v$ is a vertex of $\gamma$, $E(\gamma)$ is the set of edges of $\gamma$ and $b(e)$ denotes the beginning of the edge $e$. The orientations of all edges are chosen such that they are outgoing of $v$. We note that these are exactly the off-diagonal simplicity constraints familiar from spin foam models, see e.g. \cite{FreidelBFDescriptionOf, EngleLoopQuantumGravity}.

Since not all of these building blocks commute with each other, i.e. the ones sharing exactly one edge, we will have to resort to a non-standard procedure in order to avoid an anomaly in the quantum theory. The strong imposition of the above constraints, leading to the Barrett-Crane intertwiner \cite{BarrettRelativisticSpinNetworks}, was discussed in \cite{FreidelBFDescriptionOf}. A master constraint formulation of the vertex simplicity constraint operator was proposed in \cite{BTTIII}, however apart from providing a precise definition of the problem, this approach has not lead to a concrete solution up to now. 

In this paper, we are going to explore a different strategy for implementing the quadratic vertex simplicity constraint operators which is guided by two natural requirements:
\begin{enumerate}
\item The imposition of the constraints should be non-anomalous.
\item The imposition of the simplicity constraint operator in $D=3$ should, at least on the kinematical level, lead to the same Hilbert space as the quantisation of the classical theory without a simplicity constraint. More precisely, there should exist a natural unitary map from the solution space of the quadratic simplicity constraint operators $\mathcal{H}_\text{simple}$ to the Ashtekar-Lewandowski Hilbert space $\mathcal{H}_\text{AL}$ in $D=3$.
\end{enumerate}

The concept of gauge unfixing \cite{MitraGaugeInvariantReformulation, AnishettyGaugeInvarianceIn, VytheeswaranGaugeUnfixingIn} which was successfully used in order to derive the classical connection formulation of General Relativity \cite{BTTI,BTTII} used in this paper was originally developed in the context of anomalous gauge theory, where it was observed that first class constraints can turn into second class constraints after quantisation  \cite{MitraGaugeInvariantReformulationAnomalous, JackiwVectorMesonMass, LottDegreesOfFreedom, FaddeevOperatorAnomalyFor, RajaramanHamiltonianFormulationOf}. This is however precisely what is happening in our case: The classically Abelian simplicity constraints become a set of non-commuting operators due to the regularisation procedure used for the fluxes.  The natural question arising is thus: How does a set of maximally commuting vertex simplicity constraint operators look like?

\begin{theorem} \label{thm}
Given a $N$-valent vertex $v \in \gamma$, the set
\begin{eqnarray}
   \epsilon_{IJKL \overline{M}} R_{e_1}^{IJ}  R_{e_1}^{KL} = \ldots =  \epsilon_{IJKL \overline{M}} R_{e_N}^{IJ}  R_{e_N}^{KL} =0 \label{eq:diagonalsimplicities}\\
  \cdashline{1-2} \nonumber \\
    \epsilon_{IJKL \overline{M}} \left( R_{e_1}^{IJ} +R_{e_2}^{IJ} \right)  \left(R_{e_1}^{KL} + R_{e_2}^{KL}  \right)= 0 \nonumber \\
  \epsilon_{IJKL \overline{M}} \left( R_{e_1}^{IJ} +R_{e_2}^{IJ} +R_{e_3}^{IJ} \right)  \left(R_{e_1}^{KL} + R_{e_2}^{KL} +R_{e_3}^{KL}  \right)= 0 \nonumber \\
  \ldots \nonumber \\
   \epsilon_{IJKL \overline{M}} \left( R_{e_1}^{IJ} + \ldots +R_{e_{N-2}}^{IJ} \right)  \left(R_{e_1}^{KL} +\ldots+R_{e_{N-2}}^{KL}  \right)= 0 \label{eq:offdiagonalsimplicities}
\end{eqnarray} 
generates a closed algebra of vertex simplicity constraint operators. Under the assumption that no linear combinations with different multi-indices $\overline{M} = M_1 M_2 \ldots M_{D-3}$ are allowed\hspace{2pt}\footnote{A superposition of different multi-indices seems to be highly unnatural since an anomaly with the Gau{\ss} constraint has to be expected. We are however currently not aware of a proof which excludes this possibility from the viewpoint of a maximal closing set.}, the set is maximal in the sense that adding new vertex constraint operators spoils closure.
\end{theorem}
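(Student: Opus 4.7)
I would begin by rewriting every listed constraint in the uniform form
$$C_{\overline{M}}(S) := \epsilon_{IJKL\overline{M}}\,\pi_S^{IJ}\,\pi_S^{KL},\qquad \pi_S^{IJ} := \sum_{e\in S} R_e^{IJ},$$
so that the diagonal constraints correspond to $S=\{e_i\}$ and the off-diagonal ones to the nested prefixes $S=E_k:=\{e_1,\ldots,e_k\}$ for $k=2,\ldots,N-2$. The truncation at $k=N-2$ is natural because the Gau{\ss} constraint forces $\pi_{E_N}^{IJ}=0$ on gauge-invariant $f_\gamma$, which implies $C_{\overline{M}}(E_{N-1})=C_{\overline{M}}(\{e_N\})$ and $C_{\overline{M}}(E_N)\equiv 0$, so these would-be extra entries are not independent.

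For closure I would compute $[C_{\overline{M}}(S),C_{\overline{M}}(S')]$ case by case. Two disjoint subsets trivially yield a vanishing commutator because right-invariant vector fields on distinct edges commute, which immediately handles the single-edge constraints among themselves and against any $C_{\overline{M}}(E_k)$ whose prefix is disjoint from the chosen edge. For nested subsets $S\subset S'$, the idea is to decompose $\pi_{S'}=\pi_S+Y$ with $Y:=\pi_{S'\setminus S}$ satisfying $[\pi_S^{IJ},Y^{KL}]=0$, and expand
$$C_{\overline{M}}(S')=C_{\overline{M}}(S)+2\,\epsilon_{IJKL\overline{M}}\,\pi_S^{IJ}\,Y^{KL}+C_{\overline{M}}(S'\setminus S),$$
using $\epsilon_{IJKL\overline{M}}=\epsilon_{KLIJ\overline{M}}$ to symmetrise the cross terms. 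The first and third summands commute with $C_{\overline{M}}(S)$, so the full commutator reduces to $2[C_{\overline{M}}(S),\,\epsilon\,\pi_S Y]$, which involves only $\pi_S$ on one side and a single commuting $Y$-factor. Using the $\mathfrak{so}(D+1)$ bracket $[R^{IJ},R^{KL}]\propto\eta^{JK}R^{IL}-\eta^{IK}R^{JL}-\eta^{JL}R^{IK}+\eta^{IL}R^{JK}$ inside the $\epsilon\epsilon$ contraction, I would show the result reorganises into a linear combination of operators already on the list (with operator-valued coefficients built from $\pi_S$), thereby establishing closure.

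For maximality, my plan is contrapositive. Suppose one appends a new constraint $C_{\overline{M}}(F)$ whose edge set $F$ is neither a prefix $E_k$ nor a single-edge set (nor its Gau{\ss}-equivalent complement). Then one can always find some $E_k$ in the list for which $F$ and $E_k$ are incomparable, i.e.\ $F\cap E_k$ is a proper non-empty subset of both $F$ and $E_k$. In this situation the decomposition trick above fails because neither $\pi_F-\pi_{F\cap E_k}$ nor $\pi_{E_k}-\pi_{F\cap E_k}$ commutes with the third block $\pi_{F\cap E_k}$ in a way that produces the required cancellations. One then shows that $[C_{\overline{M}}(F),C_{\overline{M}}(E_k)]$ unavoidably contains cross terms mixing the three disjoint flux blocks $F\cap E_k$, $F\setminus E_k$, $E_k\setminus F$ that cannot be rewritten as a single $\epsilon$-contracted square of any $\pi_G$ occurring in the listed family. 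I would make this precise by exhibiting an explicit spin-network matrix element on which the commutator fails to vanish modulo the listed constraints, ruling out linear dependencies a priori.

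The main obstacle I anticipate is the closure step above: writing the reduced commutator $2[C_{\overline{M}}(S),\,\epsilon\,\pi_S Y]$ cleanly as a polynomial in the listed constraints, because contracting the $\mathfrak{so}(D+1)$ bracket against two $\epsilon$'s produces a plethora of terms whose collapse to the required form relies crucially on the total antisymmetry of $\epsilon_{IJKL\overline{M}}$ in its first four indices and, very likely, on restricting the action to gauge-invariant $f_\gamma$ where $\pi_{E_N}=0$. The maximality claim is combinatorially transparent once closure is in hand, but making ``cannot be absorbed'' rigorous requires a concrete basis computation rather than a purely abstract count.
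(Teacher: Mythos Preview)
Your closure argument is headed in the right direction but is more laborious than needed. The paper's observation is that for nested sets $S\subset S'$ the commutator $[\pi_{S'}^{AB},\pi_S^{CD}]=[\pi_S^{AB},\pi_S^{CD}]$ is precisely the $\mathfrak{so}(D+1)$ bracket, so $C_{\overline{M}}(S')$ acts on $C_{\overline{M}}(S)$ as an infinitesimal rotation; since $\epsilon$ is an invariant tensor, rotating $C_{\overline{M}}(S)$ produces a linear combination of $C_{\overline{M}'}(S)$ (the rotation effectively acts on the free $\overline{M}$ indices). This one-line observation replaces your entire expansion $C(S')=C(S)+2\epsilon\pi_S Y+C(S'\setminus S)$ and the subsequent $\epsilon\epsilon$ contraction you flag as the main obstacle; no such contraction is needed.

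Your maximality argument has a genuine gap. You only test against additional constraints of the special form $C_{\overline{M}}(F)=\epsilon\,\pi_F\pi_F$ for some edge subset $F$, but the building blocks of the vertex constraint are the individual $\epsilon_{IJKL\overline{M}}R_i^{IJ}R_j^{KL}$, and a candidate new constraint is an \emph{arbitrary linear combination} $\sum_{i<j}\alpha_{ij}\,\epsilon R_i R_j$ of these. Your $C(F)$ corresponds to the very special choice $\alpha_{ij}=1$ for $i,j\in F$ and $0$ otherwise; a generic $\alpha_{ij}$ (say $\epsilon R_1 R_3+7\,\epsilon R_2 R_4$) is not of this form and is not covered by your incomparability argument. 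The paper proceeds with the general ansatz, commutes it successively with $\epsilon R_{1\ldots k}R_{1\ldots k}$, and reduces modulo the listed constraints to expressions of the type $\delta^{ABC}_{IJK}(R_j)_{AB}(R_2)^{IJ}(R_1)^K{}_C$; it then shows by direct antisymmetrisation over $[ABIJ]$, $[ABKC]$, $[IJKC]$ that these are \emph{not} proportional to any simplicity building block, forcing the $\alpha_{ij}$ to vanish inductively. Your proposal to exhibit a spin-network matrix element would also work in principle, but you would still first need to reach the correct general ansatz.
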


\begin{proof}
Closure can be checked by explicit calculation. In order to understand why the calculation works, recall that right invariant vector fields generate the Lie algebra so$(D+1)$ as \cite{BTTIII}
\be 
\left[R^e_{IJ},R^{e'}_{KL}\right] =  \frac{1}{2} \delta_{e,e'} \left(\eta_{JK} R^e_{IL} + \eta_{IL} R^e_{JK} -\eta_{IK} R^e_{JL}-\eta_{JL} R^e_{IK}\right) 
\ee
and thus infinitesimal rotations. The commutativity of (\ref{eq:diagonalsimplicities}) has been discussed in \cite{BTTIII}. Further, we see that every element of (\ref{eq:offdiagonalsimplicities}) operates on  (\ref{eq:diagonalsimplicities}) as an infinitesimal rotation. The same is also true for the elements in  (\ref{eq:offdiagonalsimplicities}): Taking the ordering from above, every constraint operates as an infinitesimal rotation on all constraints prior in the list. Since the commutator is antisymmetric in the exchange of its arguments, closure, i.e. commutativity up to constraints, of  (\ref{eq:offdiagonalsimplicities}) follows. 

To prove maximality of the set we will show that, having chosen a subset of simplicity constraints as given in (\ref{eq:diagonalsimplicities}) and (\ref{eq:offdiagonalsimplicities}), adding any other linear combination of the building blocks (\ref{eq:allsimplicities}) spoils the closure of the algebra. To this end, we make the most general Ansatz
\be
\sum_{1\leq i<j <N} \alpha_{ij} ~ \epsilon_{IJKL\overline{M}} R^{IJ}_i R^{KL}_j
\label{eq:Ansatz}
\ee
for an $N$-valent vertex. Note that the diagonal terms $(i=j)$ are proportional to (\ref{eq:diagonalsimplicities}) and therefore do not have to be taken into account in the above sum, and that $R_{N} = \sum_{i = 1}^{N-1} R_i$ can be dropped due to gauge invariance. Moreover, $\alpha_{ij}$ can be chosen such that for fixed $j'$ not all $\alpha_{ij'}$ ($i < j'$) are equal. Otherwise, with $\alpha_{ij'} := \alpha_{j'}$ we find the term $\alpha_{j'} \epsilon_{IJKL\overline{M}}R^{IJ}_{1...(j'-1)} R^{KL}_{j'}$ in the sum, which can be expressed as a linear combination of (\ref{eq:diagonalsimplicities}) and (\ref{eq:offdiagonalsimplicities}) and therefore can be dropped. Consider
\ba
&\m& \left[  \epsilon_{IJKL\overline{M}} R^{IJ}_{12} R^{KL}_{12},  \epsilon_{ABCD\overline{E}} \left(\alpha_{13} R^{AB}_1 R^{CD}_3 + \alpha_{23} R^{AB}_2 R^{CD}_3 + ... \right) \right] \nonumber \\
&\approx& \sum_{j = 3}^{N-1} 2 \alpha_{1j} ~ \epsilon_{IJKL\overline{M}} R^{IJ}_{2} f^{KL~AB}\m_{MN} R^{MN}_{1} \epsilon_{ABCD\overline{E}} R^{CD}_j \nonumber \\
&+&  \sum_{j = 3}^{N-1} 2 \alpha_{2j} ~ \epsilon_{IJKL\overline{M}} R^{IJ}_{1} f^{KL~AB}\m_{MN} R^{MN}_{2} \epsilon_{ABCD\overline{E}} R^{CD}_j \nonumber \\
&\approx&  \sum_{j = 3}^{N-1} 2 (\alpha_{1j} - \alpha_{2j}) ~ \epsilon_{IJKL\overline{M}} R^{IJ}_{2} f^{KL~AB}\m_{MN} R^{MN}_{1} \epsilon_{ABCD\overline{E}} R^{CD}_j \text{,}
\label{eq:thm1}
\ea
where we dropped terms proportional to (\ref{eq:diagonalsimplicities}) in the first and in the second step. For a closing algebra, the right hand side of (\ref{eq:thm1}) necessarily has to be proportional to (a linear combination of) simplicity building blocks (\ref{eq:allsimplicities}). Terms containing $R_{j}$ ($j \geq 3$) have to vanish separately (In general, one could make use of gauge invariance to ``mix" the contributions of different $R_j$. However, in the case at hand this will produce terms containing $R_N$, which do not vanish if the contributions of different $R_j$s did not already vanish separately).

We start with the case $D=3$. The summands on the right hand sides of (\ref{eq:thm1}) are proportional to
\be
\delta_{IJK}^{ABC} (R_j)_{AB} (R_{2})^{IJ} (R_1)^K\m_C\text{,}
\ee
where we used the notation $\delta^{I_1...I_n}_{J_1...J_n} := n! \; \delta^{I_1}_{[J_1} \delta^{I_2}_{J_2} ... \delta^{I_n}_{J_n]}$. To show that this expression can not be rewritten as  a linear combination of the of building blocks (\ref{eq:allsimplicities}) we antisymmetrise the indices $[ABIJ]$, $[ABKC]$ and $[IJKC]$ and find in each case that the result is zero. 

For $D>3$, the summands are proportional to 
\be
\delta_{IJK\overline{M}}^{ABC\overline{E}} (R_j)_{AB} (R_{2})^{IJ} (R_1)^K\m_C \text{.}
\ee
Whatever multi-index $\overline{E}$ we might have chosen in the Ansatz (\ref{eq:Ansatz}), we can always restrict attention to those simplicity constraints in the maximal set which have the same multi-index $\overline{M} = \overline{E}$. Then, the same calculation as in the case of $D=3$ shows that the antisymmetrisations of the indices $[ABIJ]$, $[ABKC]$ and $[IJKC]$ vanish.

Therefore, the only possibilities are $(a)$ the trivial solution $\alpha_{1j} = \alpha_{2j} = 0$ or $(b)$ $\alpha_{1j} = \alpha_{2j}(\neq 0)$, which implies that the terms on the right hand side of (\ref{eq:thm1}) are a rotated version of $\epsilon_{IJKL\overline{M}} R^{IJ}_{1} R^{KL}_{2}$. The second option $(b)$ is, for $j=3$, excluded by our choice of $\alpha_{ij}$ and we must have $\alpha_{13} = \alpha_{23} = 0$. Next, consider $j=4$ and suppose we have $\alpha_{14} = \alpha_{24} := \alpha' \neq 0$. Then, we can define $\alpha'_{34} := \alpha_{34} - \alpha'$ and find the terms $\alpha' \epsilon_{IJKL\overline{M}} R^{IJ}_{123} R^{KL}_{4} + \alpha'_{34} \epsilon_{IJKL\overline{M}} R^{IJ}_{3} R^{KL}_{4}$ in (\ref{eq:Ansatz}). The first term again is already in the chosen set, which implies we can set $\alpha_{14} = \alpha_{24} = 0$ w.l.o.g. by changing $\alpha_{34} \rightarrow \alpha'_{34}$ (We will drop the prime in the following). This immediately generalises to $j>4$, and we have w.l.o.g. $\alpha_{1j} = \alpha_{2j} = 0$ ($3 \leq j < N$).

Suppose we have calculated the commutators of $\epsilon_{IJKL\overline{M}} R^{IJ}_{1...i} R^{KL}_{1...i}$ ($i = 2,...,n$) with (\ref{eq:Ansatz}) and found that for closure, we need $\alpha_{ij} = 0$ for $1\leq i \leq n$ and $i<j<N$. Then,
\ba
&\m& \left[  \epsilon_{IJKL\overline{M}} R^{IJ}_{1...(n+1)} R^{KL}_{1...(n+1)},  \epsilon_{ABCD\overline{E}} \left(\sum_{j=n+2}^{N-1}\alpha_{(n+1)j} R^{AB}_{(n+1)} R^{CD}_j + ... \right) \right] \approx \nonumber \\
&\approx& \sum_{j = (n+2)}^{N-1} 2 \alpha_{(n+1)j} ~ \epsilon_{IJKL\overline{M}} R^{IJ}_{1...n} f^{KL~AB}\m_{MN} R^{MN}_{(n+1)} \epsilon_{ABCD\overline{E}} R^{CD}_j \text{,}
\label{eq:thm2}
\ea
which, by the reasoning above, again is not a linear combination of any simplicity building blocks for any choice of $\alpha_{(n+1)j}$, and therefore only the trivial solution $\alpha_{(n+1)j} = 0$ ($n+1 < j < N$) leads to closure of the algebra.
\end{proof}

\subsection{The Solution Space of the Maximal Closing Subset}

In order to interpret this set of constraints recall from \cite{BTTIII} that the constraints in (\ref{eq:diagonalsimplicities}) are the same as the diagonal simplicity constraints acting on edges of $\gamma$ and can be solved by demanding the edge representations to be simple. The remaining constraints  (\ref{eq:offdiagonalsimplicities}) can be interpreted as specifying a recoupling scheme for the intertwiner $\iota$ at $v$: Couple the representations on $e_1$ and $e_2$, then couple this representation to $e_3$, and so forth, see fig. \ref{fig:recoupling}. We call the intermediate virtual edges $e_{12}$, $e_{123}$, $\ldots$ and denote the highest weights of the representations thereon by $\vec{\Lambda}_{12}, \vec{\Lambda}_{123}, \ldots$ Since we can use gauge invariance at all the intermediate intertwiners in the recoupling scheme, e.g.,  $R_{e_1}+R_{e_2}=R_{e_{12}}$, we have
\be
 \epsilon_{IJKL \overline{M}} \left( R_{e_1}^{IJ} +R_{e_2}^{IJ} \right)  \left(R_{e_1}^{KL} + R_{e_2}^{KL} \right) =  \epsilon_{IJKL \overline{M}} R_{e_{12}}^{IJ} R_{e_{12}}^{KL} = 0  
 \ee
and thus that the representation on $e_{12}$ has to be simple, i.e.
\be
\vec{\Lambda}_{12} = (\lambda_{12},0,...,0) \hspace{5mm} \lambda_{12} = 0,1,2,...
\ee
Using the same procedure, all intermediate representations are required to be simple and the intertwiner is labeled by $N-3$ ``spins'' $\lambda_i \in \mathbb{N}_0$. We call an intertwiner where all internal lines are labeled with simple representations simple. 

\begin{figure}[h]
\centering
\includegraphics[trim = 0mm 170mm 0mm 0mm, clip, scale=0.5]{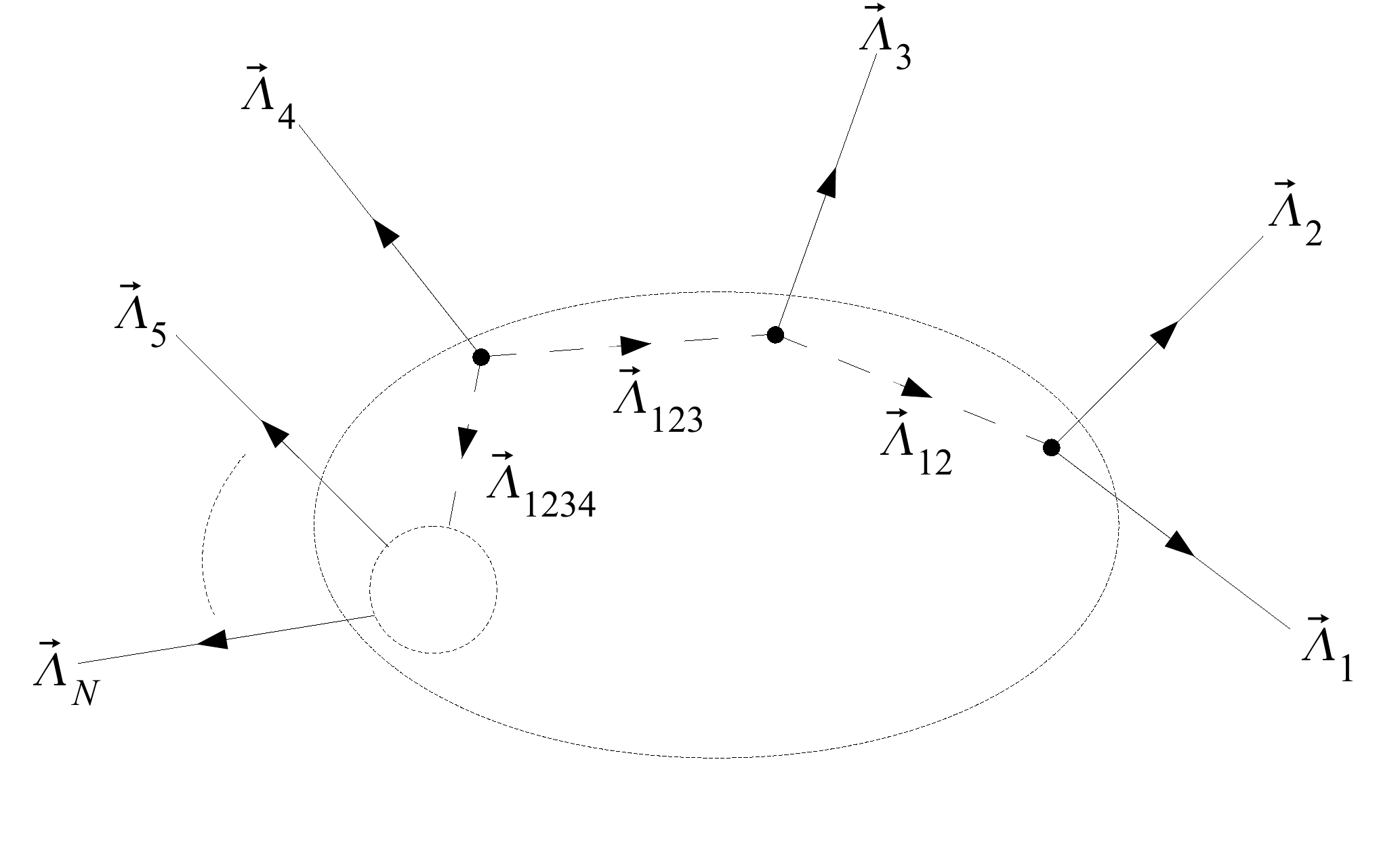}
\caption{Recoupling scheme corresponding to the subset of quadratic vertex simplicity constraint operators (\ref{eq:offdiagonalsimplicities}).}
\label{fig:recoupling}
\end{figure}

Denote by $\mathcal{I}_N^{\text{SU}(2)}$ the set of $N$-valent SU$(2)$ intertwiners and by $\mathcal{I}_{s,N}^{\text{Spin}(D+1)}$ the set of simple $N$-valent \mbox{Spin$(D+1)$} intertwiners. Recalling that an $N$-valent $\text{SU}(2)$ intertwiner can be expressed in the same recoupling basis and calling the intermediate spins $j_i$, we see that the map
\begin{eqnarray}
F : \mathcal{I}_{s,N}^{\text{Spin}(D+1)} &\rightarrow& \mathcal{I}_N^{\text{SU}(2)} \nonumber \\
\frac{1}{2} \lambda_i &\mapsto & j_i
\end{eqnarray}
is unitary (with respect the scalar products induced by the respective Ashtekar-Lewandowski measures, see \cite{BTTIII}). The motivation for the factor $1/2$ comes from the fact that $\vec{\Lambda}=(1,0)$ in $D=3$ corresponds to the familiar $j_+ = j_- = 1/2$ and the area spacings of the SO$(4)$ and the SU$(2)$ based theories agree using this identification, cf \cite{BTTIII}.

\subsection{Remarks}
\begin{enumerate}
\item Since the choice of the maximal closing subset of the simplicity constraint operators is arbitrary, no recoupling basis is preferred a priori. On the SU$(2)$ level, a change in the recoupling scheme amounts to a change of basis in the intertwiner space and therefore poses no problems. On the level of simple Spin$(D+1)$ representations however, a choice in the recoupling scheme affects the property ``simple'', since the non-commutativity of constraint operators belonging to different recoupling schemes means that kinematical states cannot have the property simple in both schemes. 

\item There exist recoupling schemes which are not included in the above procedure, e.g., take $N=6$ and the constraints $\epsilon R_{12} R_{12} = \epsilon R_{34} R_{34} =  \epsilon R_{56} R_{56} = 0$ and couple the three resulting simple representations. The theorem should however generalise to those additional recoupling schemes.

\item It is doubtful if the action of the Hamiltonian constraint leaves the space of simple intertwiners in a certain recoupling scheme invariant. To avoid this problem, one could use a projector on the space of simple intertwiners in a certain recoupling scheme to restrict the Hamiltonian constraint on this subspace and average later on over the different recoupling schemes if they turn out to yield different results. The possible drawbacks of such a procedure are however presently unclear to the authors and we refer to further research.
The construction of such a projector can be seen as a quantum analogue of the gauge unfixing process familiar from our companion paper \cite{BTTII}. A possible strategy to find a Hamiltonian constraint operator which leaves the solution space of a first class subset invariant is to construct a gauge unfixing projector which adds vertex simplicity constraints which are not in the first class subset to the Hamiltonian constraint such that it commutes with the first class subset.

\item It would be interesting to check whether the dropped constraints are automatically solved 
in the weak operator topology (matrix elements with respect to solutions to the 
maximal subset).

\item The imposition of the constraints can be stated  as the search for the joint kernel
of a maximal set of commuting generalised area operators
\be
 \text{Ar}_{\overline{M}}[S] := \sum_{U \in \mathcal{U}} \sqrt{\frac{1}{4} \epsilon_{IJKL\overline{M}} \pi^{IJ}(S_U)\pi^{KL}(S_U)|} \text{.}
\ee
Notice, however, that for  $D>3$ these generalised area operators, just as the simplicity constraints, are not gauge invariant while in $D=3$ they are 

\item In $D=3$ we have the following special situation:\\
We have two classically equivalent extensions of the ADM phase at our disposal whose 
respective symplectic reduction reproduces the ADM phase space. One of them is the 
Ashtekar-Barbero-Immirzi connection formulation in terms of the gauge group SU$(2)$
with additional SU$(2)$ Gau{\ss} constraint next to spatial diffeomorphism and Hamiltonian constraint,
and the other is our connection formulation in terms of SO$(4)$ with additional SO$(4)$ 
Gau{\ss} constraint and simplicity constraint. Both formulations are 
classically completely equivalent and thus one should expect that also the quantum theories
are equivalent in the sense that they have the same semiclassical limit. Let us ask a stronger
condition, namely that the joint kernel of SO$(4)$ Gau{\ss} and simplicity constraint of the SO$(4)$
theory is unitarily equivalent to the kernel of the SU$(2)$ Gau{\ss} constraint of the SU$(2)$ theory.
To investigate this first from the classical perspective, we split the SO$(4)$ connection and 
its conjugate momentum $(A^{IJ},\pi_{IJ})$  into self-dual and anti-selfdual parts $A_\pm^j,\pi^\pm_j)$ which then turn
out to be conjugate pairs again. It is easy to see that the 
SO$(4)$ Gau{\ss} constraint $G_{IJ}$ splits into two SU$(2)$ Gau{\ss} constraints $G^\pm_j$, one involving only 
self-dual variables and the other only anti-selfdual ones which therefore mutually commute
as one would expect. The SO$(4)$ Gau{\ss} constraint now asks for separate SU$(2)$ gauge invariance
for these two sectors.  Thus a quantisation in the Ashtekar-Isham-Lewandowski
representation would yield a kinematical Hilbert space with an orthonormal basis $T^+_{s_+}\otimes 
T^-_{s_-}$ where $S_\pm$ are usual SU$(2)$ invariant spin networks. 
The simplicity constraint, which in $D=3$ is Gau{\ss} invariant and can be imposed after 
solving the Gau{\ss} constraint, from classical perspective asks that the double density
inverse metrics $q^{ab}_\pm=\pi^a_{j\pm} \pi^b_{k\pm} \delta^{jk}$ are identical. This 
is classically equivalent to the statement that corresponding area functions
${\rm Ar}_\pm(S)$ are identical  for every $S$. The corresponding statement in the 
quantum theory is, however, again anomalous because it is well known that area operators 
do not commute with each other. On the other hand, neglecting this complication for a 
moment, it is clear that the quantum constraint can only be satisfied on vectors 
of the form $T^+_{s_+}\otimes T^-_{s_-}$  for all $S$ if $s_+,s_-$ share the same graph
and SU$(2)$ representations on the edges because if $S$ cuts a single edge transversally 
then the area operator is diagonal with an eigenvalue $\propto\sqrt{j(j+1)}$ and we can 
always arrange such an intersection situation by choosing suitable $S$. By a similar
argument one can show that the intertwiners at the edges have to be the same. But 
this is only a sufficient condition because in a sense there are too many quantum simplicity
constraints due to the anomaly. However, the discussion suggests that the joint kernel
of both SO$(4)$ and simplicity constraint is the closed linear span of vectors of the form
$T^+_s\otimes T^-_s$ for the {\it same} spin network $s=s_+=s_-$. The desired 
unitary map between the Hilbert spaces would therefore simply be 
$T_s\mapsto T^+_s\otimes T^-_s$. 

This can be justified abstractly as follows: From all possible area operators  pick a 
maximal commuting subset ${\rm Ar}^\pm_\alpha$ using the axiom of choice (i.e. 
pick a corresponding maximal set of surfaces $S_\alpha$). We may 
construct an adapted orthonormal basis $T^\pm_\lambda$ diagonalising all of 
them\footnote{ If the maximal set still separates the points of the classical 
configurations space, this should leave no room for degeneracies, that is 
the $\lambda_\alpha$ completely specify the eigenvector. We will assume this to be 
the case for the following argument.} 
such that ${\rm Ar}^\pm_\alpha T^\pm_\lambda=\lambda_\alpha T^\pm_\lambda$. 
Now the constraint 
$$
{\rm Ar}^+_\alpha\otimes \mathbb{1}=\mathbb{1}\otimes {\rm Ar}^-_\alpha      
$$
can be solved on vectors $T^+_{\lambda_+}\otimes T^-_{\lambda_-}$ by demanding
$\lambda_+=\lambda_-$. The desired unitary map would then be $T_\lambda\mapsto
T^+_\lambda\otimes T^-_\lambda$. Thus the question boils down to asking whether a maximal
closing subset can be chosen such that the eigenvalues $\lambda$ are just 
the spin networks $s$. We leave this to future research. 
\item In $D\not=3$ the afore mentioned split into selfdual and anti-selfdual sector is meaningless
and we must stick with the dimension independent scheme outlined above. An astonishing
feature of this scheme is that 
after the proposed implementation of the simplicity constraints, the size of the kinematical Hilbert space {\it is the same for all dimensions $D \geq 3$!} By ``size'', we mean that the spin networks are labelled by the same sets of quantum numbers on the graphs.
Of course, before imposing the spatial diffeomorphism constraint
these graphs are embedded into spatial slices of different dimension and thus provide 
different amounts  of degrees of freedom. 
However, after implementation of the diffeomorphism constraint, most of the embedding information will be lost and the graphs can be treated almost as abstract combinatorial objects. 
Let us neglect here, for the sake of the argument, the possibility of certain remaining moduli, depending on the amount
of diffeomorphism invariance that one imposes,  which could a priori be different in different dimensions. In the case that the proposed quantisation would turn out to be correct, that is,
allow for the correct semiclassical limit,
this would mean that {\it the dimensionality of space would be an emergent concept} dictated by the choice of semiclassical states which provide the necessary embedding information.
A possible caveat to this argument is the remaining Hamiltonian constraint and the 
algebra of Dirac observables which critically
depend on the dimension (for instance through the volume operator or dimension dependent 
coefficients, see \cite{BTTI,BTTII}) and which could require to delete 
different amounts of degrees of freedom depending on the dimension.\\
\\
This idea of dimension emergence  is not new in the field of quantum gravity, however,
it is interesting to possibly see here a concrete technical realisation which appears to  
be forced on us by demanding anomaly freedom of the simplicity constraint operators.
Of course, these speculations should be taken with great care: The number of degrees of 
freedom of the classical theory certainly {\it does} strongly  depend on the dimension and therefore 
the speculation of dimension emergence could fail exactly when we try to construct
the semiclassical sector with the solutions to the simplicity constraints advertised above.
This would mean that our scheme is wrong. On the other hand, there are indications 
 \cite{FloriSemiclassicalAnalysisOf} that the semiclassical sector of the LQG Hilbert space already in $D=3$ is entirely described in terms of 6-valent vertices. Therefore, the higher valent graphs 
 which in $D=3$ could correspond to pure quantum degrees of freedom, could account
 for the semiclassical degrees of freedom of higher dimensional General Relativity. Since 
 there is no upper limit to the valence of a graph, this would mean that already the $D=3$ theory
 contains all higher dimensional theories! \\
  \\
Obviously, this puzzle asks for thorough investigation in future research.   
\item The discussion reveals that we should compare the amount of degrees of freedom 
that the classical and the quantum simplicity constraint removes. This is a difficult 
subject, because there is no well defined scheme that attributes quantum to classical 
degrees of freedom unless the Hilbert space takes the form of a tensor product,
where each factor corresponds to precisely one of the classical configuration degrees of freedom. 
The following ``counting'' therefore is highly heuristic and speculative:\\
\\ 
 In the case $D=3$, the classical simplicity constraints remove $6$ degrees of freedom from the constraint surface per point on the spatial slice. In order to count the quantum degrees of freedom
 that are removed by the quantum simplicity constraint when acting on a spin network function, 
 we make the following, admittedly naive analogy:\\
 We attribute to a point on the spatial slice an $N$-valent vertex $v$ of the underlying 
 graph $\gamma$ which is attributed to the spatial slice. This point is equipped with 
 degrees of freedom labelled by edge representations and the intertwiner. Every edge incident at $v$ is shared by exactly one other vertex (or returns to $v$ which however does not change the result). Therefore, only half of the degrees of freedom of an edge can be attributed to 
one vertex.\\ 
We take as edge degrees of freedom the $\lfloor \frac{D+1}{2} \rfloor$ Casimir eigenvalues 
of SO$(D+1)$ labelling the irreducible representation. The edge simplicity constraint removes 
all but one of these Casimir eigenvalues, thus per edge $\lfloor \frac{D-1}{2} \rfloor$ edge 
degrees of freedom are removed. Further, a gauge invariant  intertwiner is labelled by 
a recoupling scheme involving $N-3$ irreducible representations not fixed by the irreducible 
representations carried by the edges adjacent to the vertex in question, which are fully attributed
to the vertex (there are $N-2$ virtual edges coming from coupling 1,2 then 3 etc. until N but the last
one is fixed due to gauge invariance). We take as vertex degrees of freedom these 
$N-3$ irreducible representations each of which is labelled again by $\lfloor \frac{D+1}{2} \rfloor$
Casimir eigenvalues. The vertex simplicity constraint again deletes all but one of these eigenvalues, thus it removes $(N-3) \lfloor \frac{D-1}{2} \rfloor$ quantum degrees of freedom.
We conclude that the quantum simplicity constraint removes 
\be
(N-3+\frac{N}{2}) \lfloor \frac{D-1}{2} \rfloor
\ee
quantum degrees of freedom per point ($N$-valent vertex) where $N-3$ accounts for the vertex and $N/2$ for the $N$ edges counted with half weight  as argued above. This is to be compared with the classical simplicity constraint which removes 
$D^2(D-1)/2-D$ degrees of freedom per point. Requiring equality we see that vertices
of a definitive valence $N_D$ are preferred in $D$ spatial dimensions which for large $D$ grows 
quadratically with D. Specifically for $D=3$ we find $N_3=6$. Thus, our naive 
counting astonishingly yields the same preference for 6-valent graphs in $D=3$ as has been
obtained in \cite{FloriSemiclassicalAnalysisOf} by completely different methods. 
From the analysis of \cite{FloriSemiclassicalAnalysisOf}, it transpires 
that $N_3=6$ has an entirely geometric origin and one thus would rather expect $N_D=2D$
(hypercubulations) and this may indicate that our counting is incorrect.
\end{enumerate}

\section{The Linear Simplicity Constraint Operators}

\label{sec:TheLinearSimplicity}

\subsection{Regularisation and Anomaly Freedom}
\label{sec:RegularisationAndAnomaly}

In \cite{BTTVI}, the connection formulation sketched at the beginning of the previous chapter was altered in that it contains linear simplicity constraints $\epsilon_{IJKL\overline{M}} N^I \pi^{a JK}\approx 0$ and an independent normal $N^I$ as phase space variables. The normal Poisson-commutes with both the connection $A_{aIJ}$ and its momentum $\pi^{aIJ}$ and has its own canonical momentum $P_{I}$. The necessity for this independent normal did not stem from the anomaly encountered when looking at the quadratic quantum simplicity constraints, but from the observation that it was needed to extend the connection formulation to higher dimensional supergravities. 

Since the linear simplicity constraint is a vector of density weight one, it is most naturally smeared over $(D-1)$-dimensional surfaces. The regularisation of the objects
\be
	S^b(S) := \int_S ~ b^{L\overline{M}}(x) \epsilon_{IJKL\overline{M}} N^I(x) \pi^{a JK}(x) \epsilon_{a b_1 ...b_{D-1}} dx^{b_1} \wedge ... \wedge dx^{b_{D-1}} \text{,}
\ee
where $S^b$ denotes the linear simplicity constraint, $S$ a $D-1$-surface, and $b^{L\overline{M}}$ an arbitrary semianalytic smearing function of compact support, therefore is completely analogous to the case of flux vector fields. The corresponding quantum operator 
\be
\hat{S}^b(S)f = \hat{Y}^{\epsilon b \hat N}(S)f = p^*_{\gamma_S} \hat{Y}^{\epsilon b \hat N}_{\gamma_S}(S)f_{\gamma_S}  = p^*_{\gamma_S} \sum_{e \in \gamma_{S}} \epsilon(e,S)  \epsilon_{IJKL\overline{M}} b^{L\overline{M}}(b(e)) \hat N^I(b(e)) R_e^{JK}f_{\gamma_{S}}
\ee 
has to annihilate physical states for all surfaces $S \subset \sigma$ and all semianalytic functions $b^{I\overline{M}}$ of compact support, where $p_{\gamma}$ denotes the cylindrical projection and $\gamma_S$ is a graph adapted to the surface $S$. Since we can always choose surfaces which intersect a given graph only in one point, this implies that the constraint has to vanish when acting on single points of a given graph. In \cite{BTTIII}, it has been shown that the right invariant vector fields actually are in the linear span of the flux vector fields. Therefore, it is necessary and sufficient to demand that
\be
 \epsilon_{IJKL\overline{M}} ~ b^{L\overline{M}}(b(e))~ \hat N^I(b(e)) ~R_e^{JK} \cdot f_{\gamma} = 0 \label{eq:QuantumLinearSimplicity}
\ee
for all points of $\gamma$ (which can be be seen as the beginning point of edges by suitably subdividing and inverting edges). Since $\hat N^I$ acts by multiplication and commutes with the right invariant vector fields, see \cite{BTTVI} for details, the condition is equivalent to\footnote{Use the decomposition of $X_{IJ}$ into its rotational ($\bar{X}_{IJ} := \bar{\eta}_I^K  \bar{\eta}_J^L X_{KL}$) and ``boost" parts (${\bar{X}_I := -\zeta N^J X_{IJ}}$) with respect to $N^I$ in (\ref{eq:QuantumLinearSimplicity}), where $\bar{\eta}^{IJ} = \eta^{IJ} - \zeta N^I N^J$ and $\zeta = 1$ for SO$(D+1)$ and $\zeta = -1$ for SO$(1,D)$ as internal gauge groups. It follows that $\bar{\eta}^{IJ} N_J = 0$.}
\be
	\bar{R}_e^{IJ} \cdot f_{\gamma} = 0\text{,} \label{eq:ActionLinearSimplicity}
\ee
i.e. the generators of rotations stabilising $N^I$ have to annihilate physical states. Before imposing these conditions on the quantum states, we have to consider the possibility of an anomaly. Classically and before using the singular smearing of holonomies and fluxes, both, the linear and the quadratic simplicity constraints are Poisson self-commuting. The quadratic constraint is known to be anomalous both in the Spin Foam \cite{EngleLoopQuantumGravity} as well as in the canonical picture \cite{WielandComplexAshtekarVariables, BTTIII} and thus should not be imposed strongly. 
Also the linear simplicity constraint is anomalous when using a non-zero Immirzi parameter (at least if $\gamma \neq 1$ in the Euclidean theory. But $\gamma = 1$ is ill-defined for SO$(4)$, see e.g., \cite{RovelliLecturesOnLoop}). Surprisingly, in the case at hand and without an Immirzi parameter in four dimensions, we do not find an anomaly.
However that is just because the generators of rotations stabilising $N^I$ form a closed subalgebra! Direct calculation yields, choosing (without loss of generality) $\gamma_{SS'}$ to be a graph adapted to both surfaces $S$, $S'$,
\ba
	\left[\hat{S}^b_{\gamma_{SS'}}(S),\hat{S}^{b'}_{\gamma_{SS'}}(S')\right] f_{\gamma_{SS'}}&=&
	\left[\sum_{e \in \gamma_{SS'}} \hdots \bar{R}_e^{IJ},\sum_{e' \in \gamma_{SS'}}\hdots \bar R_{e'}^{AB} \right] f_{\gamma_{SS'}} = \sum_{e \in \gamma_{SS'}} \hdots \left[ \bar R_e^{IJ}, \bar R_{e}^{AB} \right] f_{\gamma_{SS'}}  \nonumber \\
&=& \sum_{e \in \gamma_{SS'}} \hdots \bar{\eta}^I\m_K \bar{\eta}^J\m_L \bar{\eta}^A\m_C \bar{\eta}^B\m_D ~ f^{KL ~ CD}\m_{~MN} R_e^{MN} f_{\gamma_{SS'}}  \nonumber \\
&=& \sum_{e \in \gamma_{SS'}} \hdots \bar{\eta}^I\m_K \bar{\eta}^J\m_L \bar{\eta}^A\m_C \bar{\eta}^B\m_D \left(\eta^{L][C}\delta^{D]}\m_{[M}\delta_{N]}\m^{[K}\right) R_e^{MN} f_{\gamma_{SS'}}   \nonumber \\ 
&=&\sum_{e \in \gamma_{SS'}} \hdots \bar R_e^{MN} f_{\gamma_{SS'}} \text{,} 
\ea
where the operator in the last line is in the linear span of the vector fields $\hat{S}^b(S)$. The classical constraint algebra is not reproduced exactly (the commutator does not vanish identically), but the algebra of quantum simplicity constraints closes, they are of the first class. Therefore, strong imposition of the quantum constraints does make mathematical sense.

Note that up to now, we did not solve the Gau{\ss} constraint. The quantum constraint algebra of the simplicity and the Gau{\ss} constraint can easily be calculated and reproduces the classical result
\begin{alignat}{3}
	\m& \left[ \hat{S}^b(S), \hat{G}^{AB}[\Lambda_{AB}] \right] p^*_{\gamma_S} f_{\gamma_{S}} \nonumber \\ 
	=& p^*_{\gamma_S} \left[ \sum_{e' \in E(\gamma_{S}), v=b(e')} \epsilon(e',S)  \epsilon^{IJKL\overline{M}} b_{L\overline{M}}(v) \hat N^I(v) R_{e'}^{JK}, \Lambda_{AB}(v) \left( \sum_{e \in E(\gamma_S), v=b(e)} R_e^{AB} + R_N^{AB} \right) \right] f_{\gamma_{S}}  \nonumber \\ 
	=& p^*_{\gamma_S} \Lambda_{AB}(v) \epsilon^{IJKL\overline{M}} b_{L\overline{M}}(v) \sum_{e \in E(\gamma_S), v=b(e)} \epsilon(e,S) \left( \hat N^I(v) \left[ R_e^{JK},R_{e}^{AB}\right] + \eta^{I[A} N^{B]}(v) R_e^{JK} \right) f_{\gamma_{S}}  \nonumber \\ 
	=& p^*_{\gamma_S} \Lambda_{AB}(v) \epsilon^{IJKL\overline{M}} b_{L\overline{M}}(v) \sum_{e \in E(\gamma_S), v=b(e)} \epsilon(e,S) \left( \hat N^I(v) 2 \eta^{KA} R_e^{JB}+ \eta^{I[A} N^{B]}(v) R_e^{JK} \right) f_{\gamma_{S}}  \nonumber \\
	=& \hat{S}^{(- \Lambda \cdot b)}(S) p^*_{\gamma_S} f_{\gamma_{S}}\text{.}
\end{alignat}
where we used $R_N^{AB} := \frac{1}{2} \left( N^{A} \frac{\partial}{\partial N_B} - N^{B} \frac{\partial}{\partial N_A}\right)$. 
It follows that the simplicity constraint operator does not preserve the Gau{\ss} invariant 
subspace (in other words, as in the classical theory, the Gau{\ss} constraint does not generate an ideal in the constraint algebra).  
This implies that the joint kernel of both Gau{\ss} and simplicity constraint must be a 
proper subspace of the Gau{\ss} invariant subspace. It is therefore most convenient 
to look for the joint kernel on  in the kinematical (non Gau{\ss} invariant) Hilbert space.

\subsection{Solution on the Vertices}
\label{sec:SolutionOnThe}
Consider a slight modification of the usual gauge-variant spin network functions, where the intertwiners $i_v = i_v(N)$ are square integrable functions of $N^I$. Let $v$ be a vertex of $\gamma$ and $e_1,\hdots,e_n$ the edges of $\gamma$ incident at $v$, where all orientations are chosen such that the edges are all outgoing at $v$. Then we can write the modified spin network functions 
\ba
T_{\gamma, \vec{l}, \vec{i}}(A,N) &:=& (i_v(N))_{\vec{K}_1\hdots \vec{K}_n} \prod_{i=1}^n \left(\pi_{l_{e_i}}(h_{e_i}(A))\right)_{\vec{K}_i \vec{K}'_i} (M_v)_{\vec{K}'_1\hdots \vec{K}'_n}  \nonumber \\
&=& \text{tr}\left(i_v(N) \cdot \otimes_{i=1}^n \pi_{l_{e_i}}(h_{e_i}(A))\cdot M_v\right)\text{,}
\ea
where $M_v$ contracts the indices corresponding to the endpoints of the edges $e_i$ and represents the rest of the graph $\gamma$. These states span the combined Hilbert space for the 
normal field and the connection $\mathcal{H}_{T} = \mathcal{H}_{\text{grav}} \otimes \mathcal{H}_N$ (cf. \cite{BTTVI}) and they will prove convenient for solving the simplicity constraints. Choose the surface $S'$ such that it intersects a given graph $\gamma'$ only in the vertex $v' \in \gamma'$. The action of $\hat{S}^{b}(S')$ on the vertex $v'$ of a spin network $T_{\gamma', \vec l, \vec i}(A,N)$ implies with (\ref{eq:ActionLinearSimplicity}) that
\ba
\hat{S}^{b}(S')_{\gamma'} T_{\gamma', \vec{l}, \vec{i}}(A,N) &=& 0 \nonumber \\
\Longleftrightarrow ~ \text{tr} \left( \left( i_v(N) \bar{\tau}_{\pi_{l_e}}^{IJ} \right) \cdot \otimes_{i=1}^n \pi_{l_{e_i}} \left(h_{e_i}(A) \right)\cdot M_v \right) &=& 0 ~\text{$\forall e$ at $v'$,}
\ea
where $\tau^{IJ}_{\pi_{l_e}}$ here denote the generators of SO$(D+1)$ in the representation $\pi_{l_e}$ of the edge $e$ and the bar again denotes the restriction to rotational components (w.r.t. $N^I$). The above equation implies that the intertwiner $i_v$, seen as a vector transforming in the representation $\bar \pi_{l_e}$ dual to $\pi_{l_e}$ of the edge $e$, has to be invariant under the SO$(D)_N$ subgroup which stabilises the $N^I$. By definition \cite{VilenkinSpecialFunctionsAnd}, the only representations of SO$(D+1)$ which have in their space nonzero vectors which are invariant under a SO$(D)$ subgroup are of the representations of class one (cf. also appendix \ref{app:SO(D+1)}), and they exactly coincide with the simple representations used in Spin Foams \cite{FreidelBFDescriptionOf}. It is easy to see that the dual representations (in the sense of group theory) of simple representations are simple representations. Therefore, all edges must be labelled by simple representations of SO$(D+1)$. Moreover, SO$(D)$ is a massive subgroup of SO$(D+1)$ \cite{VilenkinSpecialFunctionsAnd}, so that the (unit) invariant vector $\xi_{l_e}(N)$ in the representation $\bar \pi_{l_e}$ is unique, which implies that the allowed intertwiners $i_v(N)$ are given by the tensor product of the invariant vectors of all $n$ edges and potentially an additional square integrable function $F_v(N)$, $ i_v(N) = \xi_{l_{e_1}}(N) \otimes ... \otimes \xi_{l_{e_{n}}}(N) \otimes F_v(N)$. Going over to normalised gauge invariant spin network functions implies that $F_v (N) = 1$, and the resulting intertwiner space solving the simplicity and Gau{\ss} constraint becomes one-dimensional, spanned by $I_v(N) := \xi_{l_{e_1}}(N) \otimes ... \otimes \xi_{l_{e_{n}}}(N)$. We will call these intertwiners and vertices coloured by them linear-simple. For an instructive example of the linear-simple intertwiners, consider the defining representation (which is simple since the highest weight vector is $\Lambda = (1,0,\hdots,0)$, cf. appendix \ref{app:SO(D+1)}). The unit vector invariant under rotations (w.r.t. $N^I$) is given by $N^I$ and for edges in the defining representation incoming at $v$ we simply contract $h_e^{IJ}N_J$. If the constraint is acting on an interior point of an analytic edge, this point can be considered as a trivial two-valent vertex and the above result applies. Since this has to be true for all surfaces, a spin network function solving the constraint would need to have linear-simple intertwiners at every point of its graph $\gamma$, i.e. at infinitely many points, which is in conflict with the definition of cylindrical functions (cf. \cite{ThiemannKinematicalHilbertSpaces}). In the next section, we comment on a possibility of how to implement this idea.

\subsection{Edge Constraints}

As noted above, the imposition of the linear simplicity constraint operators acting on edges is problematic, because it does not, as one might have expected, single out simple representations, but demand that at every point where it acts, there should be a linear-simple intertwiner. The problem with this type of solution is that all intertwiners, even trivial intertwiners at all interior points of edges, have to be linear-simple, which is however in conflict with the definition of a cylindrical function, in other words, there would be no holonomies left in a spin network because every point would be a $N$-dependent vertex.

It could be possible to resolve this issue using a rigging map construction \cite{HiguchiQuantumLinearizationInstabilities, GiuliniOnTheGenerality, GiuliniAUniqenessTheorem} of the type
\be
\eta(T_{\gamma, \vec{l}, \vec{l}_N, \vec{i}})[T_{\gamma', \vec{l}', \vec{l}'_N, \vec{i}'}] := \lim_{\mathfrak{P}_{\gamma} \ni p_{\gamma} \rightarrow \infty} C\left(p_\gamma, T_\gamma, T_{\gamma'} \right) \left\langle T^{p_\gamma}_{\gamma, \vec{l}, \vec{l}_N, \vec{i}}~,~ T_{\gamma', \vec{l}', \vec{l}'_N, \vec{i}'} \right\rangle_{\text{kin}} \text{,}
\label{eq:Rigging}
\ee
where $\mathfrak{P}_{\gamma}$ is the set of finite point sets $p$ of a graph $\gamma$, $p = \{\{x_i\}_{i=1}^N | x_i \in \gamma \, \forall \, i, N < \infty\}$.  $\mathfrak{P}_{\gamma}$ is partially ordered by inclusion, $q \succeq p$ if $p$ is a subset of $q$, so that the limit is meant in the sense of net convergence with respect to $\mathfrak{P}_\gamma$. By the prescription $T^{p_\gamma}_{\gamma, \vec{l}, \vec{l}_N, \vec{i}}$ we mean the projection of $T_{\gamma, \vec{l}, \vec{l}_N, \vec{i}}$ onto linear-simple intertwiners at every point in $p$ and $ C\left(p_\gamma, T_\gamma, T_{\gamma'} \right)$ is a numerical factor. Assuming this to work, consider any surface $S$ intersecting $\gamma'$. We (heuristically) find 
\ba
\eta(T_\gamma)[\hat{S}^b(S) T_{\gamma'}] &=& \lim_{\mathfrak{P}_{\gamma} \ni p_{\gamma} \rightarrow \infty}  C\left(p_\gamma, T_\gamma, T_{\gamma'} \right) \left\langle T_\gamma^p, \hat{S}^b(S) T_{\gamma'} \right\rangle_{\text{kin}}   \nonumber \\
 &=& \lim_{\mathfrak{P}_{\gamma} \ni p_{\gamma} \rightarrow \infty}  C\left(p_\gamma, T_\gamma, T_{\gamma'} \right) \left\langle [\hat{S}^b(S)]^{\dagger} T_\gamma^p, T_{\gamma'} \right\rangle_{\text{kin}} \nonumber \\
 &=& \lim_{\mathfrak{P}_{\gamma} \ni p_{\gamma} \rightarrow \infty} C\left(p_\gamma, T_\gamma, T_{\gamma'} \right) \left\langle \hat{S}^b(S) T_\gamma^p, T_{\gamma'} \right\rangle_{\text{kin}}  = 0 \text{,}
\ea
since the intersection points of $S$ with $\gamma$ will eventually be in $p_{\gamma}$ and $\hat{S}^b(S)$ is self-adjoint.

We were however not able to find such a rigging map with satisfactory properties. It is especially difficult to handle observables with respect to the linear simplicity constraint and to implement the requirement, that the rigging map has to commute with observables. It therefore seems plausible to look for non-standard quantisation schemes for the linear simplicity constraint operators, at least when acting on edges. Comparison with the quadratic simplicity constraint suggests that also the linear constraint should enforce simple representations on the edges, see the following remarks as well as section \ref{sec:mixed} for ideas on how to reach this goal.

\subsection{Remarks}

\label{sec:LinearRemarks}

The intertwiner space at each vertex is one-dimensional and thus the strong solution of the unaltered linear simplicity constraint operator contrasts the quantisation of the classically imposed simplicity constraint at first sight. A few remarks are appropriate:

\begin{enumerate}
\item One could argue that the intertwiner space at a vertex $v$ is infinite-dimensional by taking into account holonomies along edges $e'$ originating at $v$ and ending in a $1$-valent vertex $v'$. Since $e'$ and $v'$ are assigned in a unique fashion to $v$ if the valence of $v$ is at least $2$, we can consider the set $\{v, e', v'\}$ as a new ``non-local'' intertwiner. Since we can label $e'$ with an arbitrary simple representation, we get an infinite set of intertwiners which are orthogonal in the above scalar product. This interpretation however does not mimic the classical imposition of the simplicity constraints or the above imposition of the quadratic simplicity constraint operators. 

\item The main difference between the formulation of the theory with quadratic and linear simplicity
constraint respectively is the appearance of the additional normal field sector in the linear
case. Thus one could expect that one would recover the quadratic simplicity constraint formulation
by ad hoc averaging the solutions of the linear constraint over the normal field dependence with 
the probability measure $\nu_N$ defined in \cite{BTTVI}. Indeed, if one does so, then one recovers 
 the solutions to the quadratic simplicity constraints in terms of the Barrett-Crane intertwiners
 in $D=3$ and higher dimensional analogs thereof 
 as has been shown long ago by Freidel, Krasnov, and Puzio \cite{FreidelBFDescriptionOf}.
 A similar observation has been made in \cite{BaratinQuantumSimplicialGeometry}.
 Such an average also deletes the solutions with ``open ends'' of the previous remark by an appeal
 to Schur's lemma.
 Since after such an average the $N$ dependence of all solutions disappears, we can drop 
 the $\mu_N$ integral in the kinematical inner product since $\mu_N$ is a probability measure.
 The resulting effective physical scalar product would then be the Ashtekar-Lewandowski scalar product of the theory between the solutions to the quadratic simplicity constraints. 
 Such an averaging would also help with the solution of the edge constraints, since a $2$-valent linear-simple intertwiner is averaged as
\be
\int_{S^D} d \nu(N) \, \bar{\xi}^\alpha_{l}(N) \xi^\beta_{l}(N) = \frac{1}{d_{\pi_l}} \delta^{\alpha \beta} \text{,}
\ee
thus yielding a projector on simple representations.
   
\item It can be easily checked that the volume operator as defined in \cite{BTTIII}, and therefore also more general operators like the Hamiltonian constraint, do not leave the solution space to the linear (vertex) simplicity constraints invariant. A possible cure would be to introduce a projector $\mathcal{P}_S$ on the solution space and redefine the volume operator as $\hat{\mathcal{V}} := \mathcal{P}_S \hat{V} \mathcal{P}_S$. Such procedures are however questionable on the general ground that anomalies can always be removed by projectors. 

\item If one accepts the usage of the projector $\mathcal{P}_S$, calculations involving the volume operator simplify tremendously since the intertwiner space is one-dimensional. We will give a few examples which can be calculated by hand in a few lines, restricting ourselves to the defining representation of SO$(D+1)$, where the SO$(D)_N$ invariant unit vector is given by $N^I$. 

Having direct access to $N^I$, one can base the quantisation of the volume operator on the classical expression
\be
	\det{q} = \left| \frac{1}{D!} \epsilon_{IJ_1...J_D} N^I \left(\pi^{a_1 K_1J_1} N_{K_1}\right)\hdots \left(\pi^{a_D K_DJ_D} N_{K_D}\right) \epsilon_{a_1...a_D}  \right|^{\frac{1}{D-1}}\text{.}
\ee
In the case $D+1$ uneven, this choice is much easier than the expression quantised in \cite{BTTIII}. In the case $D+1$ even, the above choice is of the same complexity\footnote{Up to $(N)^{D+1}$, but in the chosen representation $\hat N$ acts by multiplication and therefore is less problematic than additional powers of right invariant vector fields.} as the one in \cite{BTTIII}, but leads to a formula applicable in any dimension and therefore, for us, is favoured. Proceeding as in \cite{BTTIII}, we obtain for the volume operator
\ba
 \hat{V}(R) &=& \int_R d^Dp \, \widehat{\left| \det (q)(p) \right| \m_\gamma }= \int_R d^Dp \, \hat{V}(p)_\gamma \text{,} \\
 \hat{V}(p) &=& \left( \frac{\hbar}{2} \right)^{\frac{D}{D-1}} \sum_{v \in V(\gamma)} \delta^D(p,v) \hat{V}_{v,\gamma} \text{,} \\
 \hat{V}_{v,\gamma} &=& \left| \frac{i^D}{D!} \sum_{e_1, \ldots, e_D \in E(\gamma), \, e_1 \cap \ldots \cap e_D = v} s(e_1, \ldots, e_D) \hat{q}_{e_q, \ldots, e_D} \right|^{\frac{1}{(D-1)}} \text{,} \\
 \hat{q}_{e_1, \ldots, e_D} &=& \epsilon_{IJ_1\hdots J_D} \hat N^I \left(R_{e_1}^{K_1J_1} \hat N_{K_1}\right)\hdots \left(R_{e_D}^{K_DJ_D} \hat N_{K_D}\right) \text{.}
\ea

Note that the operator $\hat q_{e_1,\hdots,e_D}$ is built from $D$ right invariant vector fields. Since these are antisymmetric, $\hat q_{e_1,\hdots,e_D}^T = (-1)^D \hat q_{e_1,\hdots,e_D}$. In the case at hand, we have to use the projectors $\mathcal{P}_S$ to project on the allowed one-dimensional intertwiner space, the operator $\mathcal{P}_S \hat q \mathcal{P}_S$ therefore has to vanish for the case $D + 1$ even (an antisymmetric matrix on a one-dimensional space is equal to 0). However, the volume operator depends on $\hat{q}^2$, and $\mathcal{P}_S \hat q ^2 \mathcal{P}_S$ actually is a non-zero operator in any dimension, though trivially diagonal. Therefore, also $\hat{\mathcal{V}}$ is diagonal. 

The simplest non-trivial calculation involves a $D$-valent non-degenerate (i.e. no three tangents to edges at $v$ lie in the same plane) vertex $v$ where all edges are labelled by the defining representation of SO$(D+1)$ and thus the unique intertwiner which we will denote by $\left| N^{A_1} \hdots N^{A_D} \right \rangle$. We find 
\ba
 \hat{q}_{e_1,\hdots,e_D} \left| N^{A_1} \hdots N^{A_D} \right \rangle &=& s{(e_1,...,e_D)} \left(-\frac{1}{2} \right)^{D} \left| N_I \epsilon^{IA_1...A_D} \right \rangle \nonumber \text{,} \\
 \hat{q}_{e_1,\hdots,e_D} \left| N_I \epsilon^{I A_1 \ldots A_D} \right \rangle &=& s{(e_1,...,e_D)}  \left(\frac{1}{2} \right)^{D} D! \left| N^{A_1} \hdots N^{A_D} \right \rangle \nonumber \text{,}  \\
 \hat{V}_v \left| N^{A_1} \hdots N^{A_D} \right \rangle &=& \left( \left( \frac{1}{4} \right)^{D} D! \right)^{\frac{1}{2(D-1)}} \left| N^{A_1} \hdots N^{A_D} \right \rangle \text{,}
\ea
i.e. for those special vertices, the volume operator preserves the simple vertices. For vertices of higher valence and/or other representations, we need to use the projectors. Of special interest are the vertices of valence $D+1$ (triangulation) and $2D$, where every edge has exactly one partner which is its analytic continuation through $v$ (cubulation). We find
\ba
\hat{\mathcal{V}}_v \left| N^{A_1} \hdots N^{A_{D+1}} \right \rangle &=& \left( \left( \frac{1}{4} \right)^{D} (D+1)! \right)^{\frac{1}{2(D-1)}} \left| N^{A_1} \hdots N^{A_{D+1}} \right \rangle \nonumber \text{,} \\
\hat{\mathcal{V}}_v \left| N^{A_1} \hdots N^{A_{2D}}, \text{cubic} \right \rangle &=& \left( \left( \frac{1}{2} \right)^{D} (D)! \right)^{\frac{1}{2(D-1)}} \left| N^{A_1} \hdots N^{A_{2D}}, \text{cubic} \right \rangle \text{.} 
\ea
The dimensionality of the spatial slice now appears as a quantum number like the spins labelling the representations on the edges and it could be interesting to consider a large dimension limit in the spirit of the large $N$ limit in QCD.

\item \label{rem:LinearAnomaly} When introducing an Immirzi parameter in $D=3$ \cite{BTTII}, i.e. using the linear constraint $\epsilon_{IJKL} N^J \pi^{aKL} \approx 0$ while having $\{ A_{aIJ}(x), \mbox{}^{(\gamma)}\pi^{bKL}(y)\} = 2 \delta_a^b  \delta_{[I}^K \delta_{J]}^L \delta^{(D)}(x-y)$ with $ \mbox{}^{(\gamma)}\pi^{aIJ} = \pi^{aIJ} + 1/(2 \gamma) \epsilon^{IJKL} \pi^{a} \mbox{}_{KL}$, the linear simplicity constraint operators become anomalous unless $\gamma = \pm \sqrt{\zeta}$, the (anti)self-dual case, which however results in non-invertibility of the prescription $\mbox{}^{(\gamma)}$. Repeating the steps in section \ref{sec:RegularisationAndAnomaly}, we find that these anomalous constraints require $\epsilon_{IJKL} N^I (R_e^{KL} - 1/(2\gamma) \epsilon^{KLMN} R_e^{MN}) \cdot f_{\gamma} = 0$. Since $\epsilon_{IJKL} N^I (R_e^{KL} - 1/(2\gamma) \epsilon^{KLMN} R_e^{MN})$ do not generate a subgroup, the constraint can not be satisfied strongly if the edge $e$ transforms in an irreducible representation of SO$(D+1)$ (by definition, the representation space does not contain an invariant vector).

\end{enumerate}

In order to figure out the ``correct'' quantisation, one can try, in analogy to the strategy for the quadratic simplicity constraints, to weaken the imposition of the constraints at the quantum level. The basic difference between the linear and the quadratic simplicity constraints is that the time normal $N^I$ is left arbitrary in the quadratic case and fixed in the linear case. In order to loose this dependence in the linear case, one could average over all $N^I$ at each point in $\sigma$, which however leads to the Barrett-Crane intertwiners as described above. In analogy to the quadratic constraints,  we could choose the subset 
\begin{eqnarray}
    \epsilon_{IJKL \overline{M}} N^J  \left(R_{e_1}^{KL} + R_{e_2}^{KL}  \right)= 0 \nonumber \\
  \epsilon_{IJKL \overline{M}}N^J  \left(R_{e_1}^{KL} + R_{e_2}^{KL} +R_{e_3}^{KL}  \right)= 0 \nonumber \\
  \ldots \nonumber \\
   \epsilon_{IJKL \overline{M}} N^J  \left(R_{e_1}^{KL} +\ldots+R_{e_{N-2}}^{KL}  \right)= 0 \label{eq:offdiagonallinearsimplicities}
\end{eqnarray} 
for each $N$-valent vertex plus the edge constraints. As above, the choice of the subset specifies a recoupling scheme and the imposition of the constraints leads to the contraction of the virtual edges and virtual intertwiners of the recoupling scheme with the SO$(D)_N$-invariant vectors $\xi_{l_{e_i}}(N)$ and their complex conjugates $\bar{\xi}_{l_{e_i}}(N)$, see fig. \ref{fig:linearsubsetimposition}. Gauge invariance can still be used at each (virtual) vertex in this calculation in the form $\sum_{i} \bar{R}_{e_i}=0$, which is sufficient since only $\bar{R}_{e_i}$ appears in the linear simplicity constraints. If we now integrate over each pair of $\xi_{l_{e_i}}(N)$ ``generated'' by the elements of the proposed subset of the simplicity constraint operators separately, we obtain projectors on simple representations for each of the virtual edges in the recoupling scheme. The integration over $N^I$ for the edge constraints yields projectors on simple representations in the same manner. Finally, we obtain the simple intertwiners of the quadratic operators in addition to solutions where incoming edges are contracted with SO$(D)_N$-invariant vectors $\xi_{l_{e_i}}(N)$. A few remarks are appropriate:

\begin{figure}[h]
\centering
\includegraphics[trim = 0mm 0mm 0mm 170mm, clip, scale=0.5]{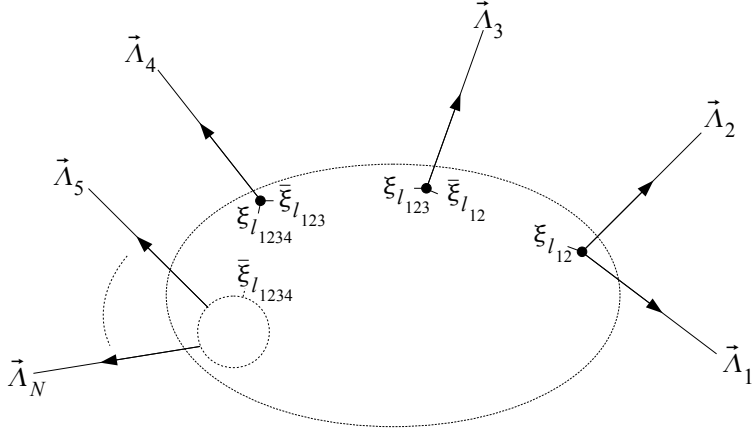}
\caption{Recoupling scheme corresponding to the subset of linear vertex simplicity constraint operators (\ref{eq:offdiagonallinearsimplicities}).}
\label{fig:linearsubsetimposition}
\end{figure}

\begin{enumerate}
\setcounter{enumi}{5}
\item Although this procedure yields a promising result, it contains several non-standard and ah-hoc steps which have to be justified. One could argue that the ``correct'' quantisation of the linear and quadratic simplicity constraints should give the same quantum theory, however, as is well known, classically equivalent theories result in general in non-equivalent quantum theories, which nevertheless can have the same classical limit.

\item It is unclear how to proceed with ``integrating out'' $N^I$ in the general case. For the vacuum theory, integration over every point in $\sigma$ gives the Barrett-Crane intertwiner for the edges contracted with SO$(D)_N$-invariant vectors. This type of integration would also get rid of the $1$-valent vertices and thus allow for a natural unitary map to the quadratic solutions as already
mentioned above. 

\item When introducing fermions, there is the possibility for non-trivial gauge-invariant functions of $N^I$ at the vertices which immediately results in the question of how to integrate out this $N^I$-dependence. Next to including those $N^I$ in the above integration or to integrate out the remaining $N^I$ separately, one could transfer this integration back into the scalar product. Since the authors are presently not aware of an obvious way to decide about these issues, we will leave them for further research. 

\end{enumerate}

\subsection{Mixed Quantisation}
\label{sec:mixed}

Since the implementation of the quadratic simplicity constraints described above yields a more promising result than the implementation of the linear constraints, we can try to perform a mixed quantisation by noting that we can classically express the linear constraints for even $D$ in the form 
\be
 \frac{1}{4} \epsilon_{IJKL\overline{M}} \pi^{aIJ} \pi^{bKL} \approx 0, ~~~ N^I-n^I(\pi) \approx 0 \text{.}
\ee
The phase space extension derived in \cite{BTTVI} remains valid when interchanging the linear simplicity constraint for the above constraints. The reason for restricting $D$ to be even is that we have an explicit expression for $n^I(\pi)$, see \cite{BTTI,BTTII}. Since a quantisation of $n^I(\pi)$ will most likely not commute with the Hamiltonian constraint operator, we resort to a master constraint. Note that the expression
\be
M'_N := \frac{\left((N^I-n^I(\pi)) \sqrt{q}^{D-1} \right) \delta_{IJ} \left((N^J-n^J(\pi)) \sqrt{q}^{D-1}\right)}{\sqrt{q}^{2D-3}} \text{,}
\ee
which is the densitised square of $N^I-n^I(\pi)$, can be quantised as 
\be
 \hat{M'}_N = 2 \widehat{\sqrt{q}^{3-2D}} ( \sqrt{| \hat{V}^I \hat{V}_I |} - N_I \hat{V}^I) \text{,}
\ee
when using a suitable factor ordering, where a quantisation of $\sqrt{q}^{3-2D}$ is described in \cite{BTTIII}. The solution space
is not empty since the intertwiner
\be
 s{(e_1,...,e_D)} \sqrt{D!} |N^{A_1}N^{A_2} \ldots N^{A_D} > + | N_B \epsilon^{B A_1 \ldots A_D}>
\ee
is annihilated by $\hat{M}_N$, which can be easily checked when using the results of the volume operator acting on the solution space of the full set of linear simplicity constraint operators. In order to turn the expression into a well defined master constraint operator, we have to square it again and to adjust the density weight, leading to 
\be
 \hat{\text{\bf M}}_N = 4 \left(  \widehat{\sqrt{q}^{5/2-2D}} ( \sqrt{| \hat{V}^I \hat{V}_I |} - N_I \hat{V}^I)  \right)^\dagger \widehat{\sqrt{q}^{5/2-2D}} ( \sqrt{| \hat{V}^I \hat{V}_I |} - N_I \hat{V}^I) \text{,}
\ee
which is by construction a self-adjoint operator with non-negative spectrum. We remark that it was necessary to use the fourth power of the classical constraint for quantisation, because the second power, having the desired property that its solution space is not empty, does not qualify as a well defined master constraint operator in the ordering we have chosen. There exists however no a priori reason why one should not take into account master constraint operators constructed from higher powers of classical constraints \cite{ThiemannQSD8}. 
Curiously,  the quadratic simplicity constraint operators as given above do not annihilate the solution 
displayed.
Clearly, the calculations will become much harder as soon as vertices with a valence higher than $D$ are used, since the building blocks of the volume operator will not be diagonal on the intertwiner space.  

This type of quantisation is further discussed in section \ref{subsec:EPRL}, where a possible application to using EPRL intertwiners is outlined. In contrast to the earlier assumption of $D$ being even in order to have an explicit expression $n^I(\pi)$, we can also perform the mixed quantisation using $ n^I n_J \approx \frac{1}{D-1} \left( \pi^{aKI} \pi_{aKJ}-\zeta \eta^{I} \m_J \right)$ for Euclidean internal signature $\zeta = 1$ and the constraint $N_J(n^I n^J (\pi) - N^I N^J) \approx 0$. For the application proposed, we will only need that the corresponding master constraint can be regularised such that it vanishes when not acting on non-trivial vertices, which can be achieved as before.

\section{Comparison to Existing Approaches}

\label{sec:Comparison}

In this section, we are going to comment on the relation of existing results from the spin foam literature to the proposals in this paper. In short, the main conclusion will be that in the case of four spacetime dimensions, many results from the spin foam literature can be used also in the canonical framework. However, they fail to work in higher dimensions due to special properties of the four dimensional rotation group which are heavily used in spin foams. We will not comment on results based on coherent state techniques \cite{FreidelANewSpin, DupuisRevisitingTheSimplicity, DupuisHolomorphicSimplicityConstraints, DupuisHolomorphicLorentzianSimplicity} since we do not see a resemblance to our results which do not make use of coherent states in any way. Nevertheless, similarities could be present as the relation between the EPRL \cite{EngleLoopQuantumGravity} and FK \cite{FreidelANewSpin} models show.

\subsection{Continuum vs. Discrete Starting Point}

The starting point for introducing the simplicity constraints in the spin foam models is the reformulation of general relativity as a BF theory subject to the simplicity constraints, and thus similar to the point of view taken in this series of papers. The crucial difference however is that while spin foam models start classically from discretised general relativity, the canonical approach discussed here starts from its continuum formulation. When looking at the simplicity constraints, this difference manifests itself in the choice of $(D-1)$-surfaces over which the the generalised vielbeins (i.e. the bivectors in spin foam models) have to be smeared. Starting from a discretisation of spacetime, the set of $(D-1)$-surfaces is fixed by foliating the discretised spacetime. Restricting to a simplicial decomposition of a four-dimensional spacetime as an example, these would e.g. be the faces $f$ of a tetrahedron $t$ in the boundary of the discretisation. It follows that one can take the bivectors $B^{IJ}$ integrated over the individual faces of a tetrahedron, $B^{IJ}_f(t) := \int_f  B^{IJ}$, as the basic variables and the quadratic (off)-diagonal simplicity constraints read \cite{EngleLoopQuantumGravity}
\ba
	C_{ff} := \epsilon_{IJKL} B_f(t)^{IJ}  B_f(t)^{KL} ~~~ & & \text{diagonal simplicity}, ~ \forall f \in t \\
	C_{f f'} := \epsilon_{IJKL} B_f(t)^{IJ}  B_{f'}(t)^{KL} ~~~ & & \text{off-diagonal simplicity}, ~ \forall f, f' \in t \label{eq:Off-diagonalSimplicitySpinfoam} \text{.}
\ea
In the continuum formulation however, we have to consider all possible $(D-1)$-surfaces, and thus also hypersurfaces containing the vertex $v$ dual to the tetrahedron $t$. The resulting flux operators a priori contain a sum of right invariant vector fields $R_e^{IJ}$ acting on all the edges $e$ connected to $v$. While this poses no problem for the diagonal simplicity constraints which act on edges of the spin networks as shown in \cite{BTTIII}, the off-diagonal simplicity constraints arising when both surfaces contain $v$ are not given by (\ref{eq:Off-diagonalSimplicitySpinfoam}), but by sums over different $C_{f f'}$, see \cite{BTTIII} for details. It can however be shown by suitable superpositions of simplicity constraints associated to different surfaces that (\ref{eq:Off-diagonalSimplicitySpinfoam}) is actually implied also by the quadratic simplicity constraints arising from a proper regularisation in the canonical framework. This statement is non-trivial and had to be proved in \cite{BTTIII}. Thus, we can also in the canonical theory consider the individual building blocks (\ref{eq:allsimplicities}) as done in section \ref{sec:QuadraticSimplicity} of this paper. Furthermore, the same is also true when using linear simplicity constraints, i.e. the properly regularised linear simplicity constraints in the canonical theory imply that all building blocks (\ref{eq:QuantumLinearSimplicity}) vanish. 

We also note that there is no analogue of the normalisation simplicity constraints \cite{FreidelBFDescriptionOf} in the canonical treatment since the generalised vielbeins do not have timelike tensorial indices after being pulled back to the spatial hypersurfaces.

\subsection{Projected Spin Networks}

Projected spin networks were originally introduced in \cite{AlexandrovHilbertSpaceStructure, LivineProjectedSpinNetworks} to describe Lorentz covariant formulations of Loop Quantum Gravity, meaning that the internal gauge group is SO$(1,3)$ (or SL$(2, \mathbb{C})$) instead of SU$(2)$. The basic idea is that next to the connection, the time normal field, often called $x$ or $\chi$ in the Spin Foam literature, becomes a multiplication operator since it Poisson-commutes classically with the connection. Since the physical degrees of freedom of Loop Quantum Gravity formulated in terms of the usual SU$(2)$ connection and its conjugate momentum are orthogonal to the time normal field, one performs projections in the spin networks from the full gauge group SO$(1,3)$ to a subgroup stabilising the time normal. Since the projector transforms covariantly under SO$(1,3)$, a (gauge invariant) projected spin network is already defined by its evaluation for a specific choice of the time normals and the resulting effective gauge invariance is only SU$(2)$, which exemplifies the relation to the usual SU$(2)$ formulation in the time gauge $x^I (=N^I) = (1,0,0,0)$. 

Despite its close relation to the techniques used in this paper and its merits for the four-dimensional treatment, there are several problems connected with using this approach in the canonical framework discussed in this series of papers which we will explain now. While the extension of projected spin networks to different gauge groups has already been discussed in \cite{LivineProjectedSpinNetworks}, there is a subtle problem associated with the part of the connection which is projected out by the projections, that could not have been anticipated by looking at Loop Quantum Gravity in terms of the Ashtekar-Barbero variables. There, the physical information in the connection, the extrinsic curvature, is located in the rotational components of the connection. To see this, consider in four dimensions the 2-parameter family of connections discussed in \cite{BTTII}\footnote{Note that the definitions of the parameters are different in \cite{BTTII} for calculational simplicity, but here we prefer this parametrisation to make our point clear.}, 
\be
	A_{aIJ} = \Gamma^{\text{hyb}}_{aIJ} + \beta K_{aIJ} + \gamma \frac{1}{2} \epsilon_{IJ} \m^{KL} K_{aKL} \text{,}
\ee
where $\gamma$ corresponds to the Barbero-Immirzi parameter restricted to four dimensions and $\beta$ is the new free parameter appearing in any dimension. $K_{aIJ}$ decomposes as \cite{BTTI, BTTII}
\be
	K_{aIJ} = 2 N_{[I} \bar{K}_{a|J]} + \bar{K}^{\text{trace}}_{aIJ} + \bar{K}^{\text{trace free}}_{aIJ} \text{,}
\ee 
where $\bar{K}_{aIJ}$ means that $N^I \bar{K}_{aIJ}  = N^J \bar{K}_{aIJ} =0$ and the trace / traceless split is performed with respect to the hybrid vielbein. The extrinsic curvature which we need to recover from $A_{aIJ}$ is located in $\bar{K}_{aJ}$, whereas $\bar{K}^{\text{trace}}_{aIJ}$ vanishes by the Gau{\ss} constraint and $\bar{K}^{\text{trace free}}_{aIJ}$ is pure gauge from the simplicity gauge transformations. 

Now setting $\beta=0$ and $N^I = (1,0,0,0)$ in four dimensions, we recover the Ashtekar-Barbero connection and see that the physical information is located in the rotational components of $A_{aIJ}$. It thus makes sense to project onto this subspace in the projected spin network construction, i.e. we are not loosing physical information. On the other hand, setting $\gamma = 0$ in four dimensions or going to higher dimensions, we see that a projection onto the subspace orthogonal to $N^I$ annihilates the physical components of the connection. This would not be necessarily an issue if one would just project the projected spin network at the intertwiners, but when one tries to go to fully projected spin networks as proposed in \cite{AlexandrovHilbertSpaceStructure}. Then, since one would take a limit of inserting projectors at every point of the spin network, the physical information in the connection would be completely lost. 

Next to this problem, there are other problems associated to taking an infinite refinement limit for projected spin networks as discussed by Alexandrov \cite{AlexandrovHilbertSpaceStructure} and Livine \cite{LivineProjectedSpinNetworks}, e.g. that fully projected spin networks are not spin networks any more (since they only contain vertices and no edges) and, connected with this problem, that the trivial bivalent vertex, the Kronecker delta, is not an allowed intertwiner. Similar problems have been encountered in section \ref{sec:TheLinearSimplicity}, i.e. while the vertex simplicity constraints could be solved by a construction very similar to projected spin networks where one projects the incoming and outgoing edges at the intertwiner {\it in the direction} of the time normal $N^I$, imposing the linear simplicity constraint on the edges, one would have to insert ``trivial'' bivalent vertices of the form $N^I N^J$ at every point of the spin network, whereas one would need to insert the the Kronecker delta $\delta^{IJ}$ to achieve cylindrical consistency while maintaining a spin network containing edges and not only vertices. 

Thus, the main problem with using (fully) projected spin networks is connected to the fact that we do not know of an analogue of the Barbero-Immirzi parameter in higher dimensions which would allow us to put the extrinsic curvature also in the rotational components of the connection. In four dimensions on the other hand, this problem would be absent and one would be left with the issue of refining the projected spin networks, which is however also present in section \ref{sec:TheLinearSimplicity} of this paper. Therefore, using projected spin networks in four dimensions with non-vanishing Barbero-Immirzi parameter is an option for the canonical framework developed in this series of papers and the known issues discussed above should be addressed in further research.

\subsection{EPRL Model}

\label{subsec:EPRL}

The basic idea of the EPRL model is to implement the diagonal simplicity constraints as usual, but to replace the off-diagonal simplicity constraints by linear simplicity constraints which are implemented with a master constraint construction \cite{EngleLoopQuantumGravity} or weakly \cite{DingTheVolumeOperator}. Furthermore, the Barbero-Immirzi parameter is a necessary ingredient. We restrict here to the Euclidean model since its group theory is much closer to the connection formulation with compact gauge group SO$(D+1)$. While the diagonal simplicity constraints give the well known relation
\be
	(j^+)^2 = \left( \frac{\gamma+1}{\gamma-1} \right)^2 (j^-)^2 \text{,}
\ee
the master constraint for the linear constraints gives \cite{EngleLoopQuantumGravity}, up to $\hbar$ corrections\footnote{Note that these $\hbar$ corrections are necessary since the master constraint, by construction, has the same solution space as the original constraint \cite{ThiemannQSD8}, i.e. $\hat{C}^{\dagger} \hat{C} \ket{\psi}=0$ implies $\hat{C} \ket{\psi}=0$. In the master constraint language, one subtracts an operator from the master constraint which vanishes in the classical limit to obtain a sufficiently large solution space.}, 
\be
	k^2 =  \left( \frac{2 j^-}{1-\gamma} \right)^2 =  \left( \frac{2 j^+}{1+ \gamma} \right)^2
\ee
where $k$ is the quantum number associated to the Casimir operator of the SU$(2)$ subgroup stabilising $N^I$. Depending on the value of the Barbero-Immirzi parameter, either $k = j^+ + j^-$ or $k = | j^+ - j^- |$ is selected by this constraint. The EPRL intertwiner for SO$(4)$ spin networks with arbitrary valency \cite{KaminskiSpinFoamsFor} is then constructed by first coupling the two SU$(2)$ subgroups of SO$(4)$ holonomies in the representations $(j^+,j^-)$, calculated along incoming and outgoing edges to the intertwiner, to the $k$ representation. Then, the $k$ representations associated to each edge are coupled via an SU$(2)$ intertwiner and the complete construction is then projected into the set of SO$(4)$ intertwiners. 

An alternative derivation proposed by Ding and Rovelli \cite{DingTheVolumeOperator} makes use of weakly implementing the linear simplicity constraints, i.e. restricting to a subspace $\mathcal{H}^{\text{ext}}$ such that 
\be
	 \braopket{\phi}{\hat{C}}{\psi}=0 ~~~ \forall ~ \ket{\phi}, \ket{\psi} \in \mathcal{H}^{\text{ext}} \text{.} 
\ee
In this approach, one can also show that the volume operator restricted to $\mathcal{H}^{\text{ext}}$ has the same spectrum as in the canonical theory, which is an important test to establish a relation between the canonical theory and the EPRL model. 

Closely related to what we already observed in the previous subsection on projected spin networks, the EPRL model makes heavy use of the fact that SO$(4)$ splits into two SU$(2)$ subgroups and that the Barbero-Immirzi parameter is available in four dimensions. Thus, we would have to restrict to four dimensions with non-vanishing $\gamma$ if we would want to use EPRL solution to the simplicity constraints. One upside of this solution when comparing to our proposition for solving the quadratic constraints is that no choice problem occurs, i.e. if we map the quantum numbers of the EPRL intertwiners to SU$(2)$ spin networks, a change of recoupling basis in the SU$(2)$ spin networks results again in EPRL intertwiners solving the same simplicity constraints.
The problem of stability of the solution space $\mathcal{H}^{\text{ext}}$ of the simplicity constraint under the action of the Hamiltonian constraint is however, to the best of our knowledge, not circumvented when using EPRL intertwiners. 

Also, in order to use the EPRL solution in the canonical framework, one would have to discuss exactly what it means to use linear and quadratic simplicity constraints in the same formulation, i.e. if one can freely interchange them and how continuity of the time normal field is guaranteed at the classical level if one changes from the quadratic constraints to linear constraints from one point on the spatial hypersurface to another. 
The mixed quantisation proposed in section \ref{sec:mixed} can be seen as an attempt to using both the time normal as an independent variable as well as quadratic simplicity constraints. In this case, the main difference is the presence of an additional constraint relating the time normal constructed from the generalised vielbeins to the independent time normal (which could be used in the linear simplicity constraints). 
In section \ref{sec:mixed}, this additional constraint was regularised as a master constraint which acts only on vertices. Taking the point of view that one can freely change between using the quadratic constraints plus this additional constraint {\it or} the linear constraints, one could choose the linear constraints for vertices and the quadratic constraints for edges. Since we can use a factor ordering for the master constraint where a commutator between a holonomy and a volume operator is ordered to the right, the master constraint would vanish on edges and only the quadratic simplicity constraints would have to be implemented, which are however not problematic. At vertices, we would be left with the linear constraints and could use the EPRL intertwiners. Thus, the EPRL solution seems to be a viable option in four dimensions. Whether one considers it natural or not to use both linear and quadratic constraints in the same formulation is a matter of personal taste. Nevertheless, it would be desirable to have only one kind of simplicity constraints. 

A further comment is due on the starting point of spin foam models, which is a BF-theory subject to simplicity constraints. It has been argued by Alexandrov \cite{AlexandrovNewVerticesAnd} that the secondary constraints resulting from the canonical analysis, i.e. the $D^{ab}_{\overline{M}}$-constraints on the torsion of $A_{aIJ}$ from our companion paper \cite{BTTII}, should be taken into account also in spin foam models. In the present canonical formulation, these constraints were removed by the gauge unfixing procedure \cite{BTTII} and thus do not have to be taken into account here. The requirement for the validity of this step was to modify the Hamiltonian constraint by an additional term quadratic in the $D^{ab}_{\overline{M}}$-constraints (the gauge unfixing term) which renders the simplicity constraints stable. While this ensures that we have to deal only with the non-commutativity of the (singularly smeared, or quantum) simplicity constraints in the present paper, the converse does not necessarily follow: Since the Hamiltonian constraint one obtains from the canonical analysis of BF-theory subject to simplicity constraints, the classical starting point of spin foam models, is not the modified Hamiltonian constraint considered here, but the one which results in the secondary $D^{ab}_{\overline{M}}$-constraints, it does not follow that these secondary constraints do not have to be taken into account in spin foam models. On the other hand, the present formulation hints that it might be possible to construct a spin foam model subject to simplicity constraints (and not $D^{ab}_{\overline{M}}$-constraints) which coincides with the dynamics defined by the modified Hamiltonian constraint.
In fact, it was recently shown that the transfer operator of spin foam models can be written as $T = $$\mu^\dagger W \mu$ (here for the EPRL model) \cite{DittrichHolonomySpinFoam}, where $\mu$ projects onto the solution space of the simplicity constraint. Taking into account the philosophy of spin foam models to impose the simplicity constraint at every time step in order to ensure that the second class $D^{ab}_{\overline{M}}$-constraints are satisfied, it is conceivable that the gauge unfixing term of the Hamiltonian constraint in \cite{BTTI, BTTII} could emerge from these $\mu$-projections when taking the continuum limit of the spin foam transfer operator. Thus, in the light of plausible arguments for both sides, only an explicit calculation will be able to decide this issue.

As a last remark, we point out that the non-commutativity of the linear simplicity constraints in the EPRL model results from using $\gamma \neq 0$ and thus we are not faced with this problem in higher dimensions. Essentially, as discussed in more detailed in remark \ref{rem:LinearAnomaly} of section \ref{sec:LinearRemarks}, while the rotations stabilising $N^I$ form an SO$(D)$ subgroup of SO$(D+1)$, the linear simplicity constraints in four dimensions with $\gamma \neq 0$ and $\beta \neq 0$ do not generate such a subgroup.

\section{Discussion and Conclusions}

\label{sec:Conclusion}

Let us briefly discuss the results of this paper and judge the different approaches. 

First, the mechanism for avoiding the non-commutativity in the quadratic simplicity constraints discussed in section \ref{sec:QuadraticSimplicity} is new to the best of our knowledge and we do not see any indication that the solution space is identical to previous results (up to the fact that it has the same ``size'' as SU$(2)$ spin networks). In the spin foam literature, the linear simplicity constraints are cornerstones of the new spin foam models and have been introduced since the quadratic simplicity constraints acting on vertices do not commute. While the methods for treating supergravity discussed in \cite{BTTVI} necessarily need an independent time normal and thus suggest using linear simplicity constraints, there is no need for the linear constraints in pure gravity (except for the fact that they exclude the topological sector in four dimensions). Therefore, one should not dismiss the quadratic constraints, especially since the linear constraints come with their own problems in the canonical approach. The solution presented in section \ref{sec:QuadraticSimplicity} is certainly not free of problems, most prominently the choice of the maximal commuting subset, but its close relation the SU$(2)$ based theory and the (natural) unitarity of the intertwiner map to SU$(2)$ intertwiners make it look very promising.   

The linear simplicity constraints come with their own set of problems, many of which were already known in the spin foam literature. While the results of section \ref{sec:QuadraticSimplicity} would naturally lead us to consider the quadratic constraints, the connection formulation of higher dimensional supergravity developed in \cite{BTTVI} makes it necessary to use an independent time normal as an additional phase space variable. This time normal would naturally point towards using linear simplicity constraints, although the mixed quantisation of section \ref{sec:mixed} could avoid this. Since there is no anomaly appearing when using the linear simplicity constraints (with $\gamma=0$ in four dimensions), we should implement them strongly. However, this leads to a solution space very different from the SU$(2)$ spin networks. At this point, it seems to be best to let oneself be guided by physical intuition and the results from the quadratic simplicity constraints as well as the desired resemblance to SU$(2)$ spin networks. Ad hoc methods for getting close to this goal have been discussed in section \ref{sec:LinearRemarks}. We however stress that these methods are, as said, ad hoc and they don't follow from standard quantisation procedures. The mixed quantisation discussed at the end of section \ref{sec:TheLinearSimplicity} also does not seem completely satisfactory, especially since the master constraint ensuring the equality of the independent normal $N^I$ and the derived normal $n^I(\pi)$ is very complicated to solve. Nevertheless, in section \ref{subsec:EPRL}, an application to EPRL intertwiners is outlined which could avoid this problem by using linear simplicity constraints for the vertices. The strength of the mixed quantisation is thus that it provides a mechanism to incorporate both the quadratic simplicity constraints as well as an independent time normal in the same canonical framework, which is what is done on the path integral side in the EPRL model. 

A comparison to results from the spin foam literature, especially projected spin networks and the EPRL model, shows that many of the problems connected with using the linear simplicity constraints have already been known, partly in different guises. While using these known results in our framework seems to be a viable option in four dimensions, we are unaware of possible ways to extend them also to higher dimensions since main ingredients are a non-vanishing Barbero-Immirzi parameter as well as special properties of SO$(4)$. 

In conclusion, we reported on several new ideas of how to treat the simplicity constraints
which appear in our connection formulation of general relativity in any dimension $D\ge 3$
\cite{BTTI,BTTII,BTTIII,BTTIV,BTTVI,BTTVII} and found that none of the presented ideas are entirely satisfactory at this point and further research on the open questions needs to be conducted. We hope that the discussion presented in this paper will be useful for an eventually
consistent formulation.
\\
\\
\\
{\bf\large Acknowledgements}\\
NB and AT thank Emanuele Alesci, Jonathan Engle, Alexander Stottmeister, and Antonia Zipfel for numerous discussions as well as Karl-Hermann Neeb and Toshiyuki Kobayashi for counsel on representation theory. NB and AT thank the German National Merit Foundation for financial support. The part of the research performed at the Perimeter Institute for Theoretical Physics was supported in part by funds from the Government of Canada through NSERC and from the Province of Ontario through MEDT. 
During final improvements of this work, NB was supported by the NSF Grant PHY-1205388 and the Eberly research funds of The Pennsylvania State University.
We acknowledge helpful suggestions from the referees of this paper which greatly improved its readability.  

\newpage

\begin{appendix}

\setcounter {equation} {0}

\renewcommand\theequation{\thesection .\arabic{equation}}

\section{Simple Irreps of SO$(D+1)$ and Square Integrable Functions on the Sphere $S^{D}$}
\label{app:SO(D+1)}
There is a natural action of SO$(D+1)$ on $F \in \mathcal{H} := L_2(S^D, d\mu)$ given by $\pi(g)F(N) := F(g^{-1} N)$. The $\pi(g)$ are called quasi-regular representations of SO$(D+1)$. The generators in this representation are of the form $\tau_{IJ} = \frac{1}{2} (\frac{\partial}{\partial N^I} N_J - \frac{\partial}{\partial N^J} N_I)$ and are known to satisfy the quadratic simplicity constraint $\tau_{[IJ}\tau_{KL]} = 0$ \cite{FreidelBFDescriptionOf}. These representations are reducible. The representation space can be decomposed into spaces of harmonic homogeneous polynomials $\mathfrak{H}^{D+1, l}$ of degree $l$ in $D+1$ variables, $L_2(S^D) = \sum_{l=0}^{\infty} \mathfrak{H}^{D+1, l}$. The restriction of $\pi(g)$ to these subspaces gives irreducible representations of SO$(D+1)$ with highest weight $\vec{\Lambda} = (l,0,...,0)$, $l \in \mathbb{N}$. These are (up to equivalence) the only irreducible representations of SO$(D+1)$ satisfying the quadratic simplicity constraint \cite{FreidelBFDescriptionOf} and therefore are mostly called simple representations in the Spin Foam community, which we will adopt in this work. Note that these representations have been studied quite extensively in the mathematical literature, where they are called most degenerate representations \cite{HormessMoreOnCoupling, AlisauskasCouplingCoefficientsOf, JunkerExplicitEvaluationOf}, (completely) symmetric representations \cite{AlisauskasCouplingCoefficientsOf, AlisauskasSomeCouplingAnd, GirardiGeneralizedYoungTableaux, GirardiKroneckerProductsFor} or representations of class one (with respect to a SO$(D)$ subgroup) \cite{VilenkinSpecialFunctionsAnd}.  The latter is due to the fact that these representations of SO$(D+1)$ are the only ones which have in their representations space a non-zero vector invariant under a SO$(D)$ subgroup, which is exactly the definition of being of class one w.r.t. a subgroup given in \cite{VilenkinSpecialFunctionsAnd}. An orthonormal basis in $\mathfrak{H}^{D+1, l}$ is given by generalisations of spherical harmonics to higher dimensions \cite{VilenkinSpecialFunctionsAnd} which we denote $\Xi^{\vec K}_l(N)$,
\be
\int_{S^D} \Xi^{\vec K}_l(N) ~\overline{\Xi^{\vec M}_{l'}(N)} ~ dN = \delta_{l'}^l \delta^{\vec{K}}_{\vec{M}}\text{,}
\ee
where $\vec{K}$ denotes an integer sequence $\vec{K} := (K_1, \hdots,K_{D-2}, \pm K_{D-1}) $ satisfying $l \geq K_1 \geq \hdots \geq K_{D-1} \geq 0$ and analogously defined $\vec{M}$. $F_l(N) \in \mathfrak{H}^{D+1, l}$ can be decomposed as $F_l(N) = \sum_{\vec K} a_{\vec{K}} \Xi^{\vec K}_l(N)$ where the sum runs over those integer sequences $\vec{K}$ allowed by the above inequality. Since $L_2(S^D) = \sum_{l=0}^{\infty} \mathfrak{H}^{D+1, l}$, any square integrable function $F(N)$ on the sphere can be expanded in a mean-convergent series of the form \cite{VilenkinSpecialFunctionsAnd} 
\be
F(N) = \sum_{l = 0}^{\infty} \sum_{\vec K_l} a^l_{\vec K_l} \Xi_l^{\vec K_l}(N)\text{.}
\ee

Consider a recoupling basis \cite{AgrawalaGraphicalFormulationOf} for the ONB of the tensor product of $N$ irreps: Choose a labelling of the irreps $\vec{\Lambda}_1,...,\vec{\Lambda}_N$. Then, consider the ONB 
\be
\left| \vec{\Lambda}_1,...,\vec{\Lambda}_N; \vec{\Lambda}_{12},\vec{\Lambda}_{123},...,\vec{\Lambda}_{1...N-1}; \vec{\Lambda},\vec{M} \right\rangle \text{,}
\ee
(with certain restrictions on the values of the intermediate and final highest weights).
 A basis in the intertwiner space is given by 
\be
\left| \vec{\Lambda}_1,...,\vec{\Lambda}_N; \vec{\Lambda}_{12},\vec{\Lambda}_{123},...,\vec{\Lambda}_{1...N-1}; 0,0 \right\rangle \text{,}
\ee
(with certain restrictions). A change of recoupling scheme corresponds to a change of basis in the intertwiner space. A basis in the intertwiner space of $N$ simple irreps is given by
\be
\left| \Lambda_1,...,\Lambda_N; \vec{\Lambda}_{12},\vec{\Lambda}_{123},...,\vec{\Lambda}_{1...N-1}; 0,0 \right\rangle \text{,}
\ee
(with certain restrictions), since in the tensor product of two simple irreps, non-simple irreps appear in general \cite{GirardiKroneckerProductsFor, GirardiGeneralizedYoungTableaux}, 
\be
(\lambda_1,0,...,0) \otimes (\lambda_2,0,...,0) = \sum_{k = 0}^{\lambda_2}\sum_{l=0}^{\lambda_2-k} (\lambda_1 + \lambda_2 - 2k - l,l,0,...,0) \hspace{5mm} (\lambda_2 \leq \lambda_1)\text{.}
\ee

\end{appendix}

\newpage

\bibliography{pa91pub.bbl}

\end{document}